\newtheorem{theorem}{Theorem}
\newtheorem{corollary}{Corollary}
\newtheorem{lemma}{Lemma}
\theoremstyle{definition}
\newtheorem{definition}{Definition}
\newtheorem*{game*}{Game}
\newtheorem*{asgame*}{Ardehali-Svetlichny Game}
\DeclareMathOperator{\Tr}{Tr}				
\newcommand{\ket}[1]{|#1\rangle}			
\newcommand{\bra}[1]{\langle #1|}			
\newcommand{\doubleket}[1]{|#1\rangle\rangle}		
\newcommand{\doublebra}[1]{\langle\langle #1|}		
\newcommand{\id}{\mathds{1}}				
\newcommand{\cc}[1]{\operatorname{\mathsf{#1}}}         
\newcommand\party[2]{ 
	\coordinate (#1L) at ($ (#1) - (1cm,0) $);
	\coordinate (#1R) at ($ (#1) + (1cm,0) $);
	\draw[-,fill=blue!20!white] (#1L) to[out=#2,in=180-#2] node[midway] (#1T) {} (#1R) to[out=#2-180,in=-#2] node[midway] (#1B) {} (#1L);
}
\newcommand\knob[3]{ 
	\node at ($ (#1) + (-45:0.3) $) {\tiny #3};
	\draw (#1) circle [radius=0.1];
	\draw (#1) circle [radius=0.15];
	\foreach \x in {1,...,12}
	{
		\def\angle{\x*360/12}
		\draw[thick] ($ (#1) + (\angle:0.1) $) -- +(\angle:0.05);
	}
	\def\angle{#2*360/12}
	\draw (#1) -- +(\angle:0.1);
}
\newcommand\meter[3]{ 
	\node at ($ (#1) + (180-45:0.3) $) {\tiny #3};
	\draw[thick] ($ (#1) - (0.15,0.1) $) rectangle ($ (#1) + (0.15,0.1) $);
	\draw ($ (#1) - (0,0.05) $) ++(20:0.1) arc (20:160:0.1);
	\def\angle{#2}
	\draw ($ (#1) - (0,0.05) $) -- ++(\angle:0.15);
}
\DeclarePairedDelimiter{\ceil}{\lceil}{\rceil}
\DeclarePairedDelimiter{\floor}{\lfloor}{\rfloor}
\begin{document}
\title{Unlimited non-causal correlations and their relation to non-locality}
\author{\"Amin Baumeler}
\orcid{0000-0001-6760-0439}
\affiliation{Institute for Quantum Optics and Quantum Information (IQOQI-Vienna), Austrian Academy of Sciences, 1090 Vienna, Austria}
\affiliation{Faculty of Physics, University of Vienna, 1090 Vienna, Austria}
\affiliation{Facolt\`{a} indipendente di Gandria, 6978 Gandria, Switzerland}
\author{Amin Shiraz Gilani}
\affiliation{Institute for Quantum Optics and Quantum Information (IQOQI-Vienna), Austrian Academy of Sciences, 1090 Vienna, Austria}
\affiliation{Department of Computer Science, University of Maryland, College Park, Maryland 20742, USA}
\author{Jibran Rashid}
\orcid{0000-0002-6927-7417}
\affiliation{School of Mathematics and Computer Science, Institute of Business Administration, Karachi, Pakistan}

\begin{abstract}
	\noindent
	Non-causal correlations certify the lack of a~definite causal order among localized space-time regions.
	In stark contrast to scenarios where a single region influences its own causal past,
	some processes that distribute non-causal correlations satisfy a series of natural desiderata: logical consistency, linear and reversible dynamics, and computational tameness.
	Here, we present such processes among arbitrary many regions where each region influences every other but itself,
	and show that the above desiderata are altogether {\em insufficient to limit the amount of ``acausality''\/} of non-causal correlations.
	This leaves open the identification of a principle that forbids non-causal correlations.
	Our results exhibit {\em qualitative and quantitative parallels\/} with the non-local correlations due to Ardehali and~Svetlichny.
\end{abstract}

\maketitle

The succession of events is usually assumed to follow a {\em fixed causal order:\/}
The causal structure~is directed acyclic~\cite{Pearl2009}.
General relativity describes {\em dynamic causal order,\/} and according to quantum theory physical quantities are {\em indefinite.\/}
Therefore, it is reasonable to expect that a satisfactory theory of quantum gravity exhibits {\em both\/} features~\cite{Hardy2005}.
This is exemplarily demonstrated by the quantum switch~\cite{Chiribella2013}:
A quantum system coherently controls the causal order between two events in its future~\cite{Colnaghi2012,Zych2017}.
It is known~\mbox{\cite{multipartyprocesses,Araujo2015,Purves2021,Wechs2021}} that the quantum switch and generalizations thereof do not violate {\em causal inequalities.\/}
Causal inequalities~\cite{Oreshkov2012,simplestcausalinequalities}, similar to Bell inequalities~\cite{Bell1964}, are theory independent and confine the observable correlations among a set of agents under the assumption of a {\em definite causal order.\/}
Although this assumption is natural, there exist motivations to study the world beyond.
In a world beyond, for instance, one can ask:
{\em How can we derive causal order without presupposing causal order, and what is the logical origin of causal order?}\footnote{This question is similar to Wheeler's puzzle: {\em ``How to derive time without presupposing time''}~\cite{Wheeler1988}.}
This question is of foundational interest and relevant to general relativity and quantum gravity~\cite{Hawking1992}.
In general relativity, for instance, no causal order is enforced, and Einstein's suspicion~\cite{EinsteinCTC} that closed time-like curves are consistent with that theory proved true~\cite{Lanczos1924,Godel1949}.
Thus, this question asks for a principle with which such exotic space-time structures are excluded.
\begin{figure}
	\centering
	\begin{tikzpicture}
		\def\knobx{-0.6}
		\def\knoby{0}
		\def\alpha{45}
		\def\kA{2}
		\def\kB{10}
		\def\kC{5}
		\def\mA{140}
		\def\mB{80}
		\def\mC{60}
		\coordinate (A) at (0,0);
		\party{A}{\alpha}
		\node at (A) {\tiny $\mu^{a,x}$};
		\coordinate (AK) at ($ (A) + (\knobx,\knoby) $);
		\knob{AK}{\kA}{$X$}
		\coordinate (AM) at ($ (A) + (-\knobx,-\knoby) $);
		\meter{AM}{\mA}{$A$}
		\coordinate (B) at ($ (AR) + (1.5cm, 0) $);
		\party{B}{\alpha}
		\node at (B) {\tiny $\nu^{b,y}$};
		\coordinate (BK) at ($ (B) + (\knobx,\knoby) $);
		\knob{BK}{\kB}{$Y$}
		\coordinate (BM) at ($ (B) + (-\knobx,-\knoby) $);
		\meter{BM}{\mB}{$B$}
		\coordinate (C) at ($ (BR) + (1.5cm, 0) $);
		\party{C}{\alpha}
		\node at (C) {\tiny $\tau^{c,z}$};
		\coordinate (CK) at ($ (C) + (\knobx,\knoby) $);
		\knob{CK}{\kC}{$Z$}
		\coordinate (CM) at ($ (C) + (-\knobx,-\knoby) $);
		\meter{CM}{\mC}{$C$}
		\def\offset{0.2}
		\def\offangle{5}
		\def\extrax{0.4}
		\def\extray{1}
		\def\c{violet!10!white}
		\path[draw,\c,fill=\c] ($ (AL) + (-\extrax,\extray) $) -- ($ (AL) + (-\extrax,0) $)
		-- ($ (AL) - (\offset,0) $) to[out=\alpha+\offangle,in=180-\alpha-\offangle] ($ (AR) + (\offset,0) $)
		-- ($ (BL) - (\offset,0) $) to[out=\alpha+\offangle,in=180-\alpha-\offangle] ($ (BR) + (\offset,0) $)
		-- ($ (CL) - (\offset,0) $) to[out=\alpha+\offangle,in=180-\alpha-\offangle] ($ (CR) + (\offset,0) $)
		-- ($ (CR) + (\extrax,0) $) -- ($ (CR) + (\extrax,\extray) $)
		-- cycle;
		\path[draw,\c,fill=\c] ($ (AL) + (-\extrax,-\extray) $) -- ($ (AL) + (-\extrax,0) $)
		-- ($ (AL) - (\offset,0) $) to[out=-\alpha-\offangle,in=180+\alpha+\offangle] ($ (AR) + (\offset,0) $)
		-- ($ (BL) - (\offset,0) $) to[out=-\alpha-\offangle,in=180+\alpha+\offangle] ($ (BR) + (\offset,0) $)
		-- ($ (CL) - (\offset,0) $) to[out=-\alpha-\offangle,in=180+\alpha+\offangle] ($ (CR) + (\offset,0) $)
		-- ($ (CR) + (\extrax,0) $) -- ($ (CR) + (\extrax,-\extray) $)
		-- cycle;
		\draw[>=stealth,double,->] (AT.center) -- ++(0,0.15) node[above,left] {\small $\mathcal O_A$};
		\draw[>=stealth,double,<-] (AB.center) -- ++(0,-0.15) node[below,left] {\small $\mathcal I_A$};
		\draw[>=stealth,double,->] (BT.center) -- ++(0,0.15) node[above,left] {\small $\mathcal O_B$};
		\draw[>=stealth,double,<-] (BB.center) -- ++(0,-0.15) node[below,left] {\small $\mathcal I_B$};
		\draw[>=stealth,double,->] (CT.center) -- ++(0,0.15) node[above,left] {\small $\mathcal O_C$};
		\draw[>=stealth,double,<-] (CB.center) -- ++(0,-0.15) node[below,left] {\small $\mathcal I_C$};
	\end{tikzpicture}
	\caption{
		In each region an experiment on a system provided by the environment is performed~(the knobs illustrate the settings and the meters the results).
		After the experiment, systems are released back to the environment.
		Process matrices~\cite{Oreshkov2012} describe the most general dynamics (functions from quantum instruments~$\mu^a_x, \nu^b_y, \tau^c_z$, which describe the experiments, to behaviors~$P_{A,B,C\mid X,Y,Z}$) such that locally no deviation from quantum theory is observed.
		Some process matrices violate causal order.
	}
	\label{fig:cc}
\end{figure}
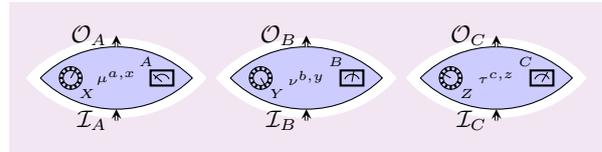

The recent process-matrix framework~\cite{Oreshkov2012} describes such a ``non-causal'' world, and relates to quantum theory in the same way as general relativity relates to special relativity:
While special relativity holds in sufficiently small space-time regions of general relativity, the process-matrix framework postulates that physics in a discrete number of local regions is described by quantum theory---and no causal order among the regions is enforced~(see Figure~\ref{fig:cc}).
This framework is known to describe indefinite causal order---{\it e.g.,\/}~the quantum switch~\cite{multipartyprocesses}---,
and moreover, {\em violates\/} causal inequalities~\cite{Oreshkov2012,Baumeler3parties,ClassicalNC1,simplestcausalinequalities,Abbott2016,Baumeler2016,multipartyprocesses,Abbott2017,Giarmatzi2019,Gu2020}.
In fact, this latter quality is {\em independent of quantum theory.\/}
Causal inequalities are also violated in the classical-probabilistic~\cite{Baumeler2016} and classical-deterministic limit~\cite{Baumeler2016fp}.
The process-matrix framework and its classical limits describe linear dynamics and comply with various desiderata:
The restriction to reversible (unitary) dynamics does not reestablish causal order~\cite{Baumeler2016fp,Araujo2017purification,Baumelerphd}, and the computational capabilities seem highly restricted~\cite{Araujo2017,Baumeler2018,Renner2021}.
In stark contrast, alternative models of violations of causal order~\cite{Deutsch1991,Hartle1994,Svetlichny2009,Svetlichny2011,Lloyd2011,Allen2014} lead to non-linear dynamics and bare unnatural features, {\it e.g.,}~quantum-state cloning~\cite{Ahn2013,Brun2013}, and extravagant computational power~\cite{Aaronson2005PostBQBPP,Aaronson2005,Aaronson2009,Aaronson2016}.

Here, we further investigate the process-matrix framework, and ask whether violations of causal inequalities vanish by increasing the number of regions---as suggested by previous studies~\cite{ClassicalNC1,Abbott2017,Araujo2017}.
The analogous question had been asked~\cite{Svetlichny1987} for violations of Bell inequalities, with the result that quantum non-local correlations are {\em unlimited:\/} They are non-vanishing for any number of bodies~\cite{Collins2002,Seevinck2002}.
We show that this is also the case here:
For any number~$n\geq 3$ of regions, {\em robust\/} violations of causal inequalities are theoretically possible, and, in contrast to the previous results, the degree of the violation increases with the number of regions.
More concretely, we design a bi-causal game~$G_n$ for~$n$ parties that is asymptotically the hardest:
As~$n$ becomes increasingly large, the maximal winning probability of the game~$G_n$ approaches~$1/2$.
Then, we show that this game is won {\em deterministically\/} in the classical-deterministic limit of the process-matrix framework.
Finally, we prove that every classical-deterministic process is a process matrix:
The game~$G_n$ is won deterministically with unitarily extendible~\cite{Araujo2017purification} process matrices.
These main findings are compactly illustrated in Figure~\ref{fig:pt}.
\begin{figure}
	\centering
	\begin{tikzpicture}
		\def\c{violet!20!white}
		\def\d{blue!20!white}
		\draw[color=\d, fill=\d] (0,0) circle (2cm);	
		\node[regular polygon, regular polygon sides=7, minimum size=4cm, fill=violet!10] (cp) {}; 
		\path[fill=red!20] (0,0) -- (cp.corner 3) -- (cp.corner 4) -- (cp.corner 5) -- (cp.corner 6) -- cycle; 
		\path[fill=black,opacity=.1] (0,.5) -- (cp.corner 3) |- (0,-2) -| (cp.corner 6) node[near end,right] {$\mathcal C^n_\textnormal{bi-causal}$} -- cycle; 
		\draw[loosely dashed] (cp.corner 6) -- (cp.corner 7) -- (cp.corner 1) -- (cp.corner 2) -- (cp.corner 3);
		\draw[dashdotted] (cp.corner 3) -- (cp.corner 4) -- (cp.corner 5) -- (cp.corner 6);
		\draw[loosely dotted] (cp.corner 3) -- (0,0) -- (cp.corner 6);
		\path[draw] (0,.5) -- (cp.corner 3) |- (0,-2) -| (cp.corner 6) node[near end,right] {$\mathcal C^n_\textnormal{bi-causal}$} -- cycle; 
		\node at (0,1) {$\mathcal C^n_\textnormal{cprocess}$};
		\node at (0,-.5) {$\mathcal C^n_\textnormal{causal}$};
		\node[anchor=east] at ($ (0,0) + (150:2cm) $) {$\mathcal C^n_\textnormal{qprocess}$};
		\fill (cp.corner 1) circle (1.75pt) node[above] {$G_n$};
		\draw[densely dotted,red] (-2.5,.5) -- (2.5,.5) node[right,align=left,font=\scriptsize,color=black] {Bi-causal\\inequality};
		\draw[densely dotted,red] (-2.5,2) -- (2.5,2);
		\path[black,latex-latex] (1.5,2cm) edge node[above right] {$\scriptstyle{ \frac{1}{2}-2^{-\ceil{n/2}}}$} (1.5,.5);
	\end{tikzpicture}
	\caption{Schematic representation of the correlation sets for~$n$ parties.
		Theorem~\ref{thm:bicausalinequality} gives the separation between the game~$G_n$ and the polytope of bi-causal correlations~$\mathcal C^n_\textnormal{bi-causal}$.
		Since the set of causal correlations is contained within the bi-causal set, this separation also holds for causal correlations.
		Theorem~\ref{thm:faces} establishes the fact that bi-causal correlations saturate the bound.
		Theorem~\ref{thm:lc} shows that the set of process-function correlations contains the point~$G_n$.
		Finally, Theorem~\ref{thm:qc} shows that every process function is a process matrix, and therefore~$\mathcal C^n_\textnormal{cprocess}\subseteq\mathcal C^n_\textnormal{qprocess}$.
}
	\label{fig:pt}
\end{figure}
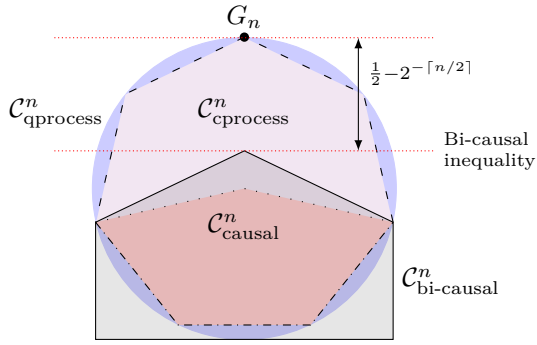
Moreover, we provide evidence that ``acausality'' in these frameworks and Bell non-locality are intimately connected.
The derived game resembles the Ardehali-Svetlichny game~\cite{Svetlichny1987,Ardehali1992}, and shares qualitative and quantitative features.
We show that this analogy also holds in another setting~\cite{Clauser1969,simplestcausalinequalities}.
Our results therefore motivate to further investigate the connection between ``non-causality'' and Bell non-locality, to envisage field-theoretic frameworks with no causal order, and to search for principles with which ``acausality'' is banned in these worlds, in general relativity, and in theories of quantum gravity.

We present our findings in the three subsequent sections.
In the first, we state our results, and in the \mbox{second,} we give a general discussion including the connections to the Ardehali-Svetlichny game and to causal structures.
In the last section, finally, we prove the theorems.

\section{Results}
\label{sec:results}
Before we present our results, we briefly comment on the notation used.
Usually we use calligraphic letters for sets, and uppercase letters for random variables.
The set~$\mathbb Z_n$ is defined as~$\{0,1,\dots,n-1\}$.
If a symbol appears with and without subscripts from some~$\mathbb Z_n$, then the bare symbol denotes the collection under the natural composition, {\it e.g.,}~$A=(A_0,\dots,A_{n-1})$.
We extend this notation further:
If~$\mathcal S\subseteq\mathbb Z_n$, then~\mbox{$A_\mathcal S=(A_k)_{k\in\mathcal S}$} and~\mbox{$A_{\setminus \mathcal S}=(A_k)_{k\in\mathbb Z_n\setminus \mathcal S}$},
and
if~$k\in\mathbb Z_n$, then~\mbox{$A_{\setminus k}=A_{\setminus\{k\}}$}.
We use~$\oplus$ for the addition modulo two, and~$\equiv_2$ for the equivalence relation modulo two.
Finally, if~$\mathcal H$ is a Hilbert space, then~$\mathcal L(\mathcal H)$ denotes the set of linear operators on~$\mathcal H$, and~$\id_\mathcal H$ is the identity operator on~$\mathcal H$.

\subsection{Bi-causal inequalities for arbitrary many parties}
Consider~$n$ parties (regions) where each party~$k\in\mathbb Z_n$ is given a random variable~$X_k$ and outputs a random variable~$A_k$.
Such a setup is described by a behavior (a conditional probability distribution)~$P_{A|X}$.
The possible behaviors~$P_{A|X}$ depend on how the parties interact.
Under the assumption of causal order, {\it i.e.,}~every party can influence her or his future only, one obtains restrictions that are mirrored by causal inequalities~\cite{Abbott2016}.
If some~$P_{A|X}$ violates such an inequality, then~$P_{A|X}$ is not compatible with a causal ordering of the parties and is called {\em non-causal.}
The set of causal correlations among~$n$ parties is~$\mathcal C^n_\textnormal{causal}$.

In a multi-party setting, however, a violation of a~causal inequality could also arise because only some parties violate causal order but not all.
In analogy to multi-party non-local correlations~\cite{Svetlichny1987,Collins2002,Seevinck2002,Gallego2012}, \mbox{Abbott~{\it et al.\/}~\cite{Abbott2017}} show that \mbox{correlations} among~$n$ parties are {\em genuinely multi-party non-causal\/}~(they are non-causal among all parties) if and only if they are not bi-causal:
The correlations cannot be simulated by partitioning the~$n$ parties in two subsets such that the subsets are causally ordered.
\begin{definition}[Bi-causal correlations~\cite{Abbott2017}]
	\label{def:bicausalcorrelations}
	An~$n$-party behavior~$P_{A\mid X}$, for~$n\geq 2$, is {\em bi-causal\/} if and only if
	\begin{align}
		P_{A\mid X} = \!\!\sum_{\emptyset \subsetneq \mathcal K \subsetneq \mathbb Z_n}
		P_K(\mathcal K)
		P_{A_\mathcal K \mid X_\mathcal K, K=\mathcal K}
		P_{A_{\setminus \mathcal K} \mid A_{\mathcal K}, X, K=\mathcal K}
		\,,
	\end{align}
	where~$K$ is a random variable with sample space~\mbox{$\{\mathcal K\mid \emptyset \subsetneq \mathcal K \subsetneq \mathbb Z_n \}$}.
	The set of bi-causal behaviors among~$n$ parties is~$\mathcal C^n_\textnormal{bi-causal}$.
	Behaviors among~$n$~parties that lie outside this set are called {\em genuinely multi-party non-causal.}
\end{definition}
Again, the restrictions imposed by bi-causality are mirrored in bi-causal inequalities.
A behavior~$P_{A|X}$ that violates a bi-causal inequality cannot be decomposed as above, and therefore is {\em genuinely multi-party non-causal.}

We describe a game that is played among~$n$ parties, for arbitrary~$n$, and derive bi-causal inequalities by upper bounding the winning probability for any bi-causal strategy (see Figure~\ref{fig:pt}).
\begin{game*}[$G_n$]
	\label{game}
	Every party~$k\in\mathbb Z_n$ receives a uniformly distributed binary random variable~$X_k$, and must deterministically produce a random variable~$A_k$ that equals~$\omega_k^n(X)$, where
	\begin{align}
		\begin{split}
			\omega_k^n: \mathbb Z_2^n &\rightarrow \mathbb Z_2\\
			x&\mapsto
			\bigoplus_{\substack{i,j\in\mathbb Z_n\setminus\{k\}\\i < j}}
			x_ix_j
			\oplus
			\bigoplus_{i\in\mathbb Z_n}
			\gamma_{k,i} x_i
		\end{split}
	\end{align}
	with
	\begin{align}
		\gamma_{k,i} :=
		\begin{cases}
			1 & (i < k \wedge i\not\equiv_2 k ) \vee (k < i \wedge i\equiv_2 k)\\
			0 & \text{otherwise.}
		\end{cases}
		\label{eq:chi}
	\end{align}
\end{game*}
In the three-party case,~$G_3$ is won whenever
\begin{align}
	\begin{split}
		A_0 = &\left( \neg X_1 \wedge X_2 \right)
		\quad\wedge\quad
		A_1 = \left( \neg X_2 \wedge X_0 \right)
		\quad\wedge
		\\
		&
		\quad
		A_2 = \left( \neg X_0 \wedge X_1 \right)
		\,,
		\label{eq:threepartygame}
	\end{split}
\end{align}
which is the three-party game by Ara{\'{u}}jo and Feix~\cite{AF}, and first published in Ref.~\cite{Baumeler2016}.
Thus, the game~$G_n$ is a generalization of that three-party game to any number of parties.
Another generalization of that game is known~\cite{Araujo2017}, however, with the drawback that the winning probability with causal strategies approaches one.

\begin{theorem}[Bi-causal inequalities]
	\label{thm:bicausalinequality}
	For~$n\geq 2$, the probability of winning the game~$G_n$ with bi-causal correlations~$P_{A\mid X}\in\mathcal C^n_\textnormal{bi-causal}$ is bounded as follows:
	\begin{align}
		\Pr\left[ A=\omega^n(X) \right]
		\leq \frac{1}{2} 
		+ \frac{1}{2^{\ceil{n/2}}}
		\,.
	\end{align}
\end{theorem}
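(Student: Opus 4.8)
The plan is to use that the winning probability $\Pr[A=\omega^n(X)]$ is an affine functional of the behaviour, so over the polytope $\mathcal C^n_\textnormal{bi-causal}$ it is maximised at a vertex: a fixed bipartition $(\mathcal K,\mathcal L)$, with $\mathcal L=\mathbb Z_n\setminus\mathcal K$, together with deterministic local responses (the conditional-probability factors in Definition~\ref{def:bicausalcorrelations} have deterministic vertices). At such a vertex every party $k\in\mathcal L$ in the second group may condition on the entire input $X$ and on $A_\mathcal K$, so it outputs $\omega_k^n(X)$ exactly and wins with certainty. The whole problem therefore collapses onto the first group: I must bound the probability that each $k\in\mathcal K$, using only a deterministic function $a_k(X_\mathcal K)$ of the inputs it can see, reproduces $\omega_k^n(X)$, and then maximise the resulting bound over the choice of $\mathcal K$.

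First I would isolate the part of $\omega_k^n$ hidden from the first group. Putting $S=\bigoplus_{j\in\mathcal L}X_j$, $Q=\bigoplus_{i<j,\;i,j\in\mathcal L}X_iX_j$, $P_k=\bigoplus_{i\in\mathcal K\setminus\{k\}}X_i$ and $L_k=\bigoplus_{j\in\mathcal L}\gamma_{k,j}X_j$, the $\mathbb Z_2$-bilinearity of the product yields $\omega_k^n(X)=g_k(X_\mathcal K)\oplus P_k\,S\oplus Q\oplus L_k$, with $g_k$ a function of $X_\mathcal K$ alone. For fixed $X_\mathcal K$ the optimal deterministic response simply outputs the most likely value of the residual vector $T=(P_kS\oplus Q\oplus L_k)_{k\in\mathcal K}$, so the winning probability equals $\mathbb E_{X_\mathcal K}\bigl[\max_d\Pr_{X_\mathcal L}[T=d]\bigr]$; the crux is to bound this maximal mass.

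The key structural observation is that $Q$ appears as the \emph{same} bit in every coordinate of $T$, so $T=Q\,\mathbf 1\oplus V$ with $V$ affine-linear in $X_\mathcal L$. Hence $\Pr[T=d]$ equals the probability that $X_\mathcal L$ lies in the affine subspace $\{V=d\}$, of probability $2^{-r}$ with $r=r(X_\mathcal K)$ the rank of the linear system fixed by the parities $P_k$ and the phases $\gamma_{k,j}$, times the conditional correlation of $(-1)^{Q}$ on that subspace. The latter is a Gauss sum of the quadratic character $(-1)^{\binom{|X_\mathcal L|}{2}}$ restricted to the subspace, so $\max_d\Pr[T=d]\le 2^{-r}\bigl(\tfrac12+\tfrac12\,b\bigr)$ with $b$ the modulus of that restricted Gauss sum, which decays like $2^{-(|\mathcal L|-r)/2}$ when the form stays nondegenerate. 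Here the linear phases $\gamma_{k,j}$ are essential: they govern both the rank $r(X_\mathcal K)$ and the twist of the Gauss sum, and are tuned so that the two sources of first-group uncertainty combine to the stated bound rather than to a crude triangle-inequality estimate over all $2^{|\mathcal K|}$ targets, which would be far too lossy.

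I expect the genuine obstacle to be the final bookkeeping that turns these ingredients into the exact constant $2^{-\ceil{n/2}}$. One must evaluate the twisted Gauss sum $\mathbb E_{X_\mathcal L}(-1)^{\binom{|X_\mathcal L|}{2}\oplus L_{k}}$, whose modulus in the cleanest case equals $2^{-(|\mathcal L|-1)/2}\,\bigl|\cos(\pi(|\mathcal L|-1)/4)\bigr|$ and can even \emph{vanish} for $|\mathcal L|$ of the wrong parity, then fold it against the distribution of $r(X_\mathcal K)$ and optimise over all bipartitions. This optimisation does not land on one uniform split: for odd $n$ the extreme bipartition $|\mathcal K|=1$ already realises the advantage $2^{-(n+1)/2}=2^{-\ceil{n/2}}$, whereas for even $n$ that split degenerates (the Gauss sum is zero) and a more balanced partition takes over, again saturating $2^{-\ceil{n/2}}$. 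Verifying that the specific $\gamma_{k,j}$ make every bipartition meet—and none exceed—this bound uniformly in $n$ is the technical heart of the argument; the three-party case $G_3$, where $|\mathcal L|=2$ gives bias $\tfrac12$ and hence winning probability $\tfrac34=\tfrac12+2^{-2}$, is the sanity check I would calibrate against.
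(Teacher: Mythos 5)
Your reduction is sound and coincides with the paper's first step: fixing the bipartition $(\mathcal K,\mathcal L)$, letting the second group win for free, decomposing $\omega^n_k(X)=g_k(X_\mathcal K)\oplus P_kS\oplus Q\oplus L_k$ (this is exactly Equation~\eqref{eq:revised} of the paper), and reducing the problem to bounding $\mathbb E_{X_\mathcal K}\bigl[\max_d\Pr_{X_\mathcal L}[T=d]\bigr]$ with $T=Q\,\mathbf 1\oplus V$. From there on, however, there is a genuine gap: the entire quantitative content of the theorem is deferred (``the final bookkeeping that turns these ingredients into the exact constant''), and your sketch of that bookkeeping contains two errors. First, the master inequality $\max_d\Pr[T=d]\le 2^{-r}\bigl(\tfrac12+\tfrac12 b\bigr)$ is false: since $T=Q\,\mathbf 1\oplus V$, the event $\{T=d\}$ is the \emph{disjoint union} of $\{V=d,\,Q=0\}$ and $\{V=d\oplus\mathbf 1,\,Q=1\}$, and your formula keeps only one branch. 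In the rank-one case (all $V_k$ equal to a single linear form $v$) the true value is $\tfrac12+\tfrac12\bigl|\mathbb E\bigl[(-1)^{Q\oplus v}\bigr]\bigr|$, roughly twice what your formula yields, so as an upper bound it is unsound, not merely lossy. (Also, when two coordinates of $V$ differ but one is the zero form---which occurs for suitable $x_\mathcal K$---the rank can still be $1$ and the fallback estimate $2\cdot 2^{-r}$ is vacuous.) Second, your claim that for even $n$ the split $|\mathcal K|=1$ ``degenerates'' because the Gauss sum vanishes is wrong: the linear part of $\omega^n_0$ restricted to $x_{\setminus 0}$ is the \emph{alternating} string, and the bias of an $m$-ary Ardehali-Svetlichny function with alternating twist equals $2^{-\floor{m/2}}$ for every $m$; it never vanishes (vanishing occurs only for other twists, e.g.\ the untwisted quadratic at $m\equiv 3\pmod 4$). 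Indeed, Theorem~\ref{thm:faces} is proved precisely by showing that single-party bipartitions saturate the bound for \emph{all} $n\geq 2$. Since this Gauss-sum evaluation is exactly the part you left undone, the error is material evidence that the deferred step would not have gone through as envisioned.

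For comparison, the paper never evaluates a Gauss sum. It splits on the event $\mathcal E^\mathcal K$ that all residual functions of the first group coincide up to functions of $X_\mathcal K$ (your rank-one case): Lemma~\ref{lemma:occurenceE} shows this event contains at most two strings $x_\mathcal K$, hence has probability at most $2^{-|\mathcal K|+1}$; on $\mathcal E^\mathcal K$, the single effective guess is of an $(n-|\mathcal K|)$-ary Ardehali-Svetlichny function, and the known local bound (Lemma~\ref{lemma:svetlichny}, invoked as a black box) gives $\tfrac12+2^{-\floor{(n-|\mathcal K|)/2}-1}$; off $\mathcal E^\mathcal K$, two residual functions differ by a nonzero parity of $X_{\setminus\mathcal K}$, and Lemma~\ref{lemma:parity-relative} bounds the joint guessing probability by $\tfrac12$ with a two-line conditioning argument. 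Combining yields $\tfrac12+2^{-\floor{(n+|\mathcal K|)/2}}\le\tfrac12+2^{-\ceil{n/2}}$ for every fixed $\mathcal K$, hence for every convex mixture over bipartitions. If you wish to keep your purely Fourier-analytic route, you must repair the two-branch formula and then actually carry out the twisted Gauss-sum computations---at which point you would in effect be reproving the Ardehali-Svetlichny inequality inline rather than citing it.
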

We prove this theorem in Section~\ref{proofs:inequality}.
From Definition~\ref{def:bicausalcorrelations}, moreover, it is evident that~$G_n$ is asymptotically the {\em hardest\/} bi-causal game where at least one party guesses a binary variable.
\begin{lemma}[Least bi-causal winning probability]
	If~$\mathbb Z_n$ is a set of parties,~$d_\textnormal{min}:=\min_{k\in\mathbb Z_n}|\mathcal A_k|$, and ~$\sigma$ a function~$\bigtimes_{k\in\mathbb Z_n}\mathcal X_k\rightarrow \bigtimes_{k\in\mathbb Z_n} \mathcal A_k$, then
	\begin{align}
		\max_{P_{A\mid X}\in\mathcal C^n_\textnormal{bi-causal}}
		\Pr\left[ 
			A = \sigma(X)
		\right]
		\geq
		\frac{1}{d_\textnormal{min}}
		\,.
	\end{align}
\end{lemma}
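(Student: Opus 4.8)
The plan is to exhibit a single bi-causal strategy that already achieves winning probability $1/d_\textnormal{min}$ for an arbitrary target function $\sigma$; since the maximum over $\mathcal{C}^n_\textnormal{bi-causal}$ can only be larger, the bound follows. The key observation is the built-in asymmetry of the decomposition in Definition~\ref{def:bicausalcorrelations}: in any term indexed by $\mathcal{K}$, the parties in the complement $\mathbb{Z}_n\setminus\mathcal{K}$ generate their outputs through $P_{A_{\setminus\mathcal{K}}\mid A_\mathcal{K}, X, K=\mathcal{K}}$, which is permitted to depend on the \emph{entire} input $X$. Hence every party placed in the second group can simply evaluate $\sigma$ and output its correct value with certainty; only the parties placed in the first group are handicapped, since $P_{A_\mathcal{K}\mid X_\mathcal{K}, K=\mathcal{K}}$ sees only their own inputs $X_\mathcal{K}$.

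First I would fix a party $k^\ast\in\mathbb{Z}_n$ attaining the minimum, so that $|\mathcal{A}_{k^\ast}| = d_\textnormal{min}$, and choose the degenerate distribution $P_K(\{k^\ast\}) = 1$ (legal since $\{k^\ast\}$ is a nonempty proper subset of $\mathbb{Z}_n$ for $n\geq 2$). This isolates $k^\ast$ in the first group and places every remaining party in the second group. Next I would specify the strategy: each party $k\neq k^\ast$, having access to all of $X$, outputs $A_k = \sigma_k(X)$ deterministically, while $k^\ast$ outputs a uniformly random element of its alphabet $\mathcal{A}_{k^\ast}$, independently of its input. Both conditional distributions are well defined, so the resulting behavior lies in $\mathcal{C}^n_\textnormal{bi-causal}$.

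It then remains to compute the winning probability of this strategy. The event $A = \sigma(X)$ holds exactly when $A_{k^\ast} = \sigma_{k^\ast}(X)$, because all other coordinates match by construction. As $A_{k^\ast}$ is uniform on $\mathcal{A}_{k^\ast}$ and independent of $X$ and of the other outputs, this occurs with probability $1/|\mathcal{A}_{k^\ast}| = 1/d_\textnormal{min}$, regardless of the value of $X$. Therefore $\max_{P_{A\mid X}\in\mathcal{C}^n_\textnormal{bi-causal}}\Pr[A=\sigma(X)]\geq 1/d_\textnormal{min}$.

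I do not anticipate a genuine obstacle: the content is a construction rather than an estimate, and the only points needing care are purely bookkeeping---verifying that $\{k^\ast\}$ is an admissible index of the decomposition and that a uniform guess is a valid conditional distribution. Specializing to $d_\textnormal{min} = 2$ recovers the value $1/2$ that the bound of Theorem~\ref{thm:bicausalinequality} approaches, which is precisely the sense in which $G_n$ is asymptotically the hardest bi-causal game featuring a party with a binary output.
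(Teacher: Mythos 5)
Your proposal is correct and follows exactly the paper's own proof: place the party $k^\ast$ with $|\mathcal A_{k^\ast}|=d_\textnormal{min}$ alone in the first subset via $P_K(\{k^\ast\})=1$, let it guess uniformly at random, and let every other party deterministically output $\sigma_\ell(X)$ using access to the full input. The additional bookkeeping remarks (admissibility of $\{k^\ast\}$, independence of the uniform guess) only make explicit what the paper leaves implicit.
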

\begin{proof}
	This lower bound is bi-causally achieved by placing a single party~$k$ with~$d_\textnormal{min}=|\mathcal A_k|$ in the first subset, {\it i.e.,}~$P_K(\{k\})=1$,
	and by letting that party make a uniformly random guess, {\it i.e.,}~$P_{A_k\mid X_k}(a,x)=|\mathcal A_k|^{-1}$ for all~$(a,x)\in\mathcal A_k\times\mathcal X_k$.
	Every other party~$\ell\in\mathbb Z_n\setminus \{k\}$ has access to all random variables~$X_0,\dots,X_{n-1}$ and deterministically generates~$\sigma_\ell(X)$.
\end{proof}

We also establish that the above bound on the probability of winning~$G_n$ with bi-causal behaviors is tight~(see Figure~\ref{fig:pt}).
\begin{theorem}[Faces]
	\label{thm:faces}
	The inequalities of Theorem~\ref{thm:bicausalinequality} represent faces of the bi-causal polytopes:
	The value of the game~$G_n$ with bi-causal behaviors is
	\begin{align}
		\max_{P_{A\mid X}\in\mathcal C^n_{\textnormal{bi-causal}}} \Pr\left[ A=\omega^n(X) \right]
		=
		\frac{1}{2} 
		+ \frac{1}{2^{\ceil{n/2}}}
	\end{align}
	for at least~$n/2$ bi-causal extremal points if~$n$ is even, and at least~$n$ bi-causal extremal points if~$n$ is odd.
\end{theorem}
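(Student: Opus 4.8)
The plan is to prove tightness by exhibiting explicit deterministic bi-causal behaviors that attain the bound of Theorem~\ref{thm:bicausalinequality}, and then to count how many such behaviors exist. The guiding observation is that in any bipartite decomposition the second group learns all inputs, so it can reproduce its targets~$\omega_j^n(X)$ perfectly; the winning probability of a deterministic strategy with first group~$\mathcal K$ therefore equals the probability that every party in~$\mathcal K$ guesses correctly. I would take the extreme case of a singleton first group~$\mathcal K=\{k\}$. A direct inspection of~$\omega_k^n$ shows it contains no monomial involving~$x_k$ (the quadratic sum omits the index~$k$ and~$\gamma_{k,k}=0$), so party~$k$'s output cannot depend on any input it sees; its optimal deterministic response is the constant~$c_k$ that maximises~$\Pr_X[\omega_k^n(X)=c_k]$. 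Hence the winning probability of this behavior is~$\tfrac12(1+|\beta_k|)$ with the bias~$\beta_k:=\mathbb E_X[(-1)^{\omega_k^n(X)}]$, and the whole task reduces to (i) computing~$|\beta_k|$ and (ii) counting the~$k$ for which it equals~$2^{1-\ceil{n/2}}$.

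For step~(i) I would treat~$\omega_k^n=Q_k\oplus L_k$ as a quadratic form over~$\mathbb F_2$ on the~$m=n-1$ variables~$x_{\setminus k}$, where~$Q_k=\bigoplus_{i<j,\,i,j\neq k} x_ix_j$ is the full symmetric form and~$L_k=\bigoplus_i\gamma_{k,i}x_i$ is linear. The polar (bilinear) form of~$Q_k$ is the matrix~$J+I$ (all ones off the diagonal). A short computation with~$(J+I)^2=mJ+I$ shows that this form is nondegenerate when~$m$ is even (i.e.~$n$ odd) and has a one-dimensional radical, spanned by the all-ones vector~$\mathbf 1$, when~$m$ is odd (i.e.~$n$ even). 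I would then invoke the standard Gauss-sum dichotomy for~$\mathbb F_2$ quadratic forms: if the form vanishes on the radical its character sum has magnitude~$2^{(m+r)/2}$ (with~$r$ the radical dimension), and otherwise it is zero. For~$n$ odd, nondegeneracy plus completing the square gives~$|\beta_k|=2^{-m/2}=2^{1-\ceil{n/2}}$ for every~$k$, irrespective of the linear part~$L_k$; so all~$n$ singleton choices attain the bound. For~$n$ even, $r=1$ and the magnitude is~$2^{(1-m)/2}=2^{1-\ceil{n/2}}$ exactly when~$\omega_k^n(\mathbf 1)=0$, and is~$0$ otherwise.

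Step~(ii) is then a parity count. Writing~$\omega_k^n(\mathbf 1)=\binom{n-1}{2}\oplus\#\{i\neq k:\gamma_{k,i}=1\}\pmod 2$ and evaluating~$\#\{i:\gamma_{k,i}=1\}=\lceil k/2\rceil+\lfloor(n-1-k)/2\rfloor$ from the definition~\eqref{eq:chi}, I would show for~$n$ even that this sum has a fixed parity on each residue class of~$k$, and that combined with~$\binom{n-1}{2}\equiv n/2-1\pmod 2$ it yields~$\omega_k^n(\mathbf 1)=0$ precisely for the~$n/2$ even values of~$k$. This produces~$n/2$ admissible singletons when~$n$ is even and~$n$ when~$n$ is odd. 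Each such behavior is deterministic, hence an extremal point of~$\mathcal C^n_\textnormal{bi-causal}$, and the behaviors for distinct~$k$ differ (party~$k$ is constant in its own strategy but reproduces the non-constant~$\omega_k^n(X)$ in the others). Comparing the attained value with the upper bound of Theorem~\ref{thm:bicausalinequality} shows it is maximal, which establishes both the equality and the stated multiplicities.

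The main obstacle I anticipate is the exact evaluation of~$|\beta_k|$ in the presence of the linear correction~$L_k$ and the degenerate polar form for even~$n$: one must apply the radical/Gauss-sum dichotomy carefully (completing the square modulo the radical) rather than relying on the pure symmetric form, whose bias in fact vanishes when~$n\equiv 0\pmod 4$. The remaining effort is the bookkeeping of the parity count for~$\#\{i:\gamma_{k,i}=1\}$ and a verification that the constructed deterministic behaviors are indeed distinct extremal points of the bi-causal polytope.
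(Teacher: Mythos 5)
Your proposal is correct, and while its skeleton coincides with the paper's (place a single party~$k$ alone in the first subset, let it answer with its best constant---its target~$\omega_k^n(X)$ does not depend on~$X_k$---and let the remaining parties, who see all inputs, answer perfectly, so that deterministic such behaviors are automatically extreme points of the bi-causal polytope), the two key computations are done by a genuinely different route. The paper fixes~$k=0$ with the constant guess~$0$ and evaluates~$\Pr[\omega_0^n(X)=0]$ by a two-step induction on the cardinality of~$\mathcal Z^n=\{x\mid S_\lambda^{n-1}(x_{\setminus 0})=0\}$, via the recursion~$|\mathcal Z^{n+2}|=2|\mathcal Z^n|+2^n$; it then gets the multiplicities from the invariance of~$\omega_0^n$ under relabelings~$0\leftrightarrow k$ for~$k$ even (and additionally~$0\leftrightarrow\ell$ for~$\ell$ odd when~$n$ is odd). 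You instead compute the bias~$\beta_k=\mathbb E_X[(-1)^{\omega_k^n(X)}]$ in closed form with the Gauss-sum dichotomy for quadratic functions over~$\mathbb F_2$: the polar form~$J+I$ of the pure quadratic part is nondegenerate for~$m=n-1$ even (so all~$n$ singletons saturate, irrespective of the linear terms~$\gamma_{k,i}$), and has radical spanned by~$\mathbf 1$ for~$m$ odd, where saturation holds iff the full function (quadratic plus linear) vanishes at~$\mathbf 1$. Your counting formula~$\#\{i:\gamma_{k,i}=1\}=\ceil{k/2}+\floor{(n-1-k)/2}$ is correct and, combined with~$\binom{n-1}{2}\equiv n/2-1 \pmod 2$, indeed yields~$\omega_k^n(\mathbf 1)=0$ exactly for the~$n/2$ even values of~$k$ when~$n$ is even (this matches a direct check at~$n=4$ and~$n=6$, and agrees with the paper's relabeling count). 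Your route buys strictly more information: it characterizes \emph{exactly} which singleton strategies saturate, shows that for even~$n$ the odd-$k$ singletons have zero bias (winning probability exactly~$1/2$ under any constant guess), and correctly isolates the role of the linear part, whose omission would wrongly kill the bias when~$n\equiv 0\pmod 4$; the paper's route buys elementarity, needing nothing beyond counting and a symmetry observation. Your closing remarks on extremality (deterministic behaviors lying in a polytope of behaviors are extreme points of it) and distinctness (for~$n\geq 3$ each~$\omega_k^n$ is non-constant, so the~$n/2$ respectively~$n$ behaviors differ) correctly close the remaining gaps.
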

We prove this theorem in Section~\ref{sec:saturation}.

\subsection{Classical violations}
In the {\em classical-deterministic\/} limit~\cite{Baumeler2016fp} of the process-matrix framework~\cite{Oreshkov2012},
each party obtains a system from the environment, locally applies an arbitrary function~(intervention), and outputs a system to the environment~({\it cf.\/}~Figure~\ref{fig:cc}).
Let~$\mathcal I_k$ be the set of possible states party~$k\in\mathbb Z_n$ can receive from the environment, and~$\mathcal O_k$ the set of possible states party~$k$ can release to the environment.
It is known that the most general dynamics in this setup (without assuming causal order) is described with {\em process functions.}
\begin{definition}[Process function~\cite{Baumeler2016fp,Baumeler2017}]
	An~$n$-party {\em process function\/} is a function~$\omega:\mathcal O \rightarrow \mathcal I$ for some sets~$\mathcal O=\bigtimes_{k\in\mathbb Z_n}\mathcal O_k$,~$\mathcal I=\bigtimes_{k\in\mathbb Z_n}\mathcal I_k$ such that
	\begin{align}
		\forall f \;\exists !i: i=\omega(f(i))
		\,,
	\end{align}
	where~$f=(f_k:\mathcal I_k \rightarrow \mathcal O_k)_{k\in\mathbb Z_n}$ is a collection of functions.
\end{definition}
In words, a process function accounts for the interaction among the parties and has a {\em unique\/} fixed point for each intervention~$f$ of the parties.
Thus, given a choice of interventions~$f$, the process function {\em uniquely\/} determines the states the parties receive from the environment~(see Figure~\ref{fig:cprocess}).
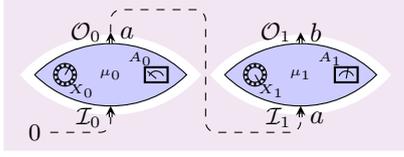
\begin{figure}
	\centering
	\begin{tikzpicture}
		\def\knobx{-0.6}
		\def\knoby{0}
		\def\alpha{45}
		\def\kA{2}
		\def\kB{10}
		\def\mA{140}
		\def\mB{80}
		\coordinate (A) at (0,0);
		\party{A}{\alpha}
		\node at (A) {\tiny $\mu_0$};
		\coordinate (AK) at ($ (A) + (\knobx,\knoby) $);
		\knob{AK}{\kA}{$X_0$}
		\coordinate (AM) at ($ (A) + (-\knobx,-\knoby) $);
		\meter{AM}{\mA}{$A_0$}
		\coordinate (B) at ($ (AR) + (1.5cm, 0) $);
		\party{B}{\alpha}
		\node at (B) {\tiny $\mu_1$};
		\coordinate (BK) at ($ (B) + (\knobx,\knoby) $);
		\knob{BK}{\kB}{$X_1$}
		\coordinate (BM) at ($ (B) + (-\knobx,-\knoby) $);
		\meter{BM}{\mB}{$A_1$}
		\def\offset{0.2}
		\def\offangle{5}
		\def\extrax{0.4}
		\def\extray{1}
		\def\c{violet!10!white}
		\path[draw,\c,fill=\c] ($ (AL) + (-\extrax,\extray) $) -- ($ (AL) + (-\extrax,0) $)
		-- ($ (AL) - (\offset,0) $) to[out=\alpha+\offangle,in=180-\alpha-\offangle] ($ (AR) + (\offset,0) $)
		-- ($ (BL) - (\offset,0) $) to[out=\alpha+\offangle,in=180-\alpha-\offangle] ($ (BR) + (\offset,0) $)
		-- ($ (BR) + (\extrax,0) $)
		-- ($ (BR) + (\extrax,\extray) $)
		-- cycle;
		\path[draw,\c,fill=\c] ($ (AL) + (-\extrax,-\extray) $) -- ($ (AL) + (-\extrax,0) $)
		-- ($ (AL) - (\offset,0) $) to[out=-\alpha-\offangle,in=180+\alpha+\offangle] ($ (AR) + (\offset,0) $)
		-- ($ (BL) - (\offset,0) $) to[out=-\alpha-\offangle,in=180+\alpha+\offangle] ($ (BR) + (\offset,0) $)
		-- ($ (BR) + (\extrax,0) $) -- ($ (BR) + (\extrax,-\extray) $)
		-- cycle;
		\draw[>=stealth,double,->] (AT.center) -- ++(0,0.15) node[above,left] {\small $\mathcal O_0$} node[above,right] {$a$};
		\draw[>=stealth,double,<-] (AB.center) -- ++(0,-0.15) node[below,left] {\small $\mathcal I_0$};
		\draw[>=stealth,double,->] (BT.center) -- ++(0,0.15) node[above,left] {\small $\mathcal O_1$} node[above,right] {$b$};
		\draw[>=stealth,double,<-] (BB.center) -- ++(0,-0.15) node[below,left] {\small $\mathcal I_1$} node[below,right] {$a$};
		\path (A) -- (B) coordinate [midway] (M);
		\draw[rounded corners,dashed] ($ (AT) + (0,0.15) $)
		-- ++(0,0.3)
		-| (M)
		|- ($ (BB) - (0,0.35) $)
		-- ++(0,0.3);
		\node at ($ (AB) - (1,0.35) $) (z) {\small$0$};
		\draw[rounded corners,dashed] (z) -| ($ (AB) - (0,0.15) $);
	\end{tikzpicture}
	\caption{
		The process function is a function from~$\mathcal O$ to~$\mathcal I$.
		The dashed connections illustrates an example of a two-party process function where the left party is in the causal past of the right party, {\it i.e.,}~$\forall (a,b)\in\mathcal O_0\times\mathcal O_1: (a,b) \mapsto(0,a)$.
	}
	\label{fig:cprocess}
\end{figure}
This is intuitive: No fixed point corresponds to a logical contradiction, and multiple fixed points to an ambiguity~\cite{Baumeler2016fp}.\footnote{In fact, the above definition is tantamount to the requirement that for all~$f$ there exists {\em at least one\/} fixed point (no contradiction)~\cite{Baumeler2017}, and tantamount to the requirement that for all~$f$ there exists {\em at most one\/} fixed point (no ambiguity)~\cite{Baumeler2020}.}
For three parties or more, there exist process functions that do not reflect a causal order among the parties (Equation~\eqref{eq:threepartygame} interpreted as a function~$(X_0,X_1,X_2)\mapsto(A_0,A_1,A_2)$ is an example).

The parties in this classical-deterministic world have access to a process function to generate some behavior~$P_{A\mid X}$.
Each party~$k$ receives an element from the set~$\mathcal I_k$ and some~$X_k$, and applies a function (intervention)~$\mu_k:\mathcal X_k\times\mathcal I_k\rightarrow\mathcal A_k\times\mathcal O_k$ to generate the output~$A_k$ and the system that is released to the environment.
The process function illustrated in Figure~\ref{fig:cprocess}, for instance, leads to behaviors~$P_{A\mid X}$ where party~$0$ is in the causal past of party~$1$.
\begin{definition}[Process-function behavior]
	An~$n$-party behavior~$P_{A\mid X}$ is a {\em deterministic process-function behavior\/} if and only if there exists some~$n$-party process function~$\omega$ and interventions~$\mu=(\mu_0,\dots,\mu_{n-1})$ such that
	\begin{align}
		P_{A\mid X}(a,x) =
		\begin{cases}
			1 & \exists i: \alpha(x,i) = a \wedge \omega(\beta(x,i)) = i\\
			0 & \text{otherwise,}
		\end{cases}
	\end{align}
	where~$\alpha_k: \mathcal X_k\times\mathcal I_k\rightarrow\mathcal A_k$, and~$\beta_k:\mathcal X_k\times\mathcal I_k\rightarrow\mathcal O_k$ are the components of~$\mu_k=(\alpha_k,\beta_k)$.
	The set of~$n$-party {\em process-function behaviors\/} is the convex hull of all~$n$-party deterministic process-functions behaviours and is denoted by~$\mathcal C^n_{\textnormal{cprocess}}$.
\end{definition}
\begin{theorem}
	\label{thm:lc}
	The function~$\omega^n$ of the game~$G_n$ is an~$n$-party process function.
\end{theorem}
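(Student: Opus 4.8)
The plan is to unfold the definition of a process function in the Boolean setting and reduce the claim to a finite statement about fixed points. Here every alphabet is a single bit, $\mathcal{O}_k=\mathcal{I}_k=\mathbb{Z}_2$, so $\omega^n\colon\mathbb{Z}_2^n\to\mathbb{Z}_2^n$ has components $\omega_k^n$, and each local function $f_k\colon\mathbb{Z}_2\to\mathbb{Z}_2$ is affine, $f_k(b)=a_k\oplus c_k b$ for some $a_k,c_k\in\mathbb{Z}_2$. First I would record the \emph{self-independence} property: since the quadratic part of $\omega_k^n$ ranges only over pairs avoiding $k$, and since $\gamma_{k,k}=0$ (neither $i<k$ nor $k<i$ holds for $i=k$ in \eqref{eq:chi}), the component $\omega_k^n$ does not depend on $x_k$. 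This is the formal content of ``each region influences every other but itself,'' and it is what will give the fixed-point map its special structure.

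Writing $x=f(i)$, the condition $i=\omega^n(f(i))$ becomes $x_k=f_k(\omega_k^n(x))=a_k\oplus c_k\,\omega_k^n(x)$ for all $k$, and the two fixed-point sets are in bijection via $i\mapsto f(i)$ and $x\mapsto\omega^n(x)$. Hence it suffices to prove that, for every $c\in\mathbb{Z}_2^n$, the map $H_c\colon\mathbb{Z}_2^n\to\mathbb{Z}_2^n$ with $H_c(x)_k:=x_k\oplus c_k\,\omega_k^n(x)$ is a bijection: a unique preimage of $a$ under $H_c$ is exactly a unique fixed point of the self-map whose $k$-th component is $a_k\oplus c_k\,\omega_k^n(x)$. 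Self-independence makes $H_c$ perturb coordinate $k$ only through the other coordinates. On a finite set bijectivity is equivalent to injectivity, so by the equivalence recorded after the definition (``at most one fixed point'' suffices) I would aim to show $H_c$ is injective for each of the $2^n$ choices of $c$.

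For injectivity I would take $x,x'$ with $H_c(x)=H_c(x')$, set $d:=x\oplus x'$ with support $T:=\{k:d_k=1\}$, and argue $T=\emptyset$. The equations give $d_k=c_k\bigl(\omega_k^n(x)\oplus\omega_k^n(x')\bigr)$, so every $k\in T$ has $c_k=1$ and $\omega_k^n(x)\oplus\omega_k^n(x')=1$. Expanding the degree-two polynomial $\omega_k^n$ along $d$, this difference collapses to a parity expression in the Hamming weight of $x$ on $T$, the cardinality $|T\setminus\{k\}|$ (contributing $\binom{|T\setminus\{k\}|}{2}\bmod 2$ from the quadratic part, plus cross terms $\bigoplus_{i\in T}x_i$ over the relevant index ranges), and the coefficients $\gamma_{k,i}$ for $i\in T$. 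The heart is then a purely combinatorial claim: the asymmetric, parity-dependent pattern of $\gamma_{k,i}$, which orients each pair $\{i,k\}$ by the relative order and parity of $i$ and $k$, is incompatible with all $|T|$ of these equations holding at once unless $T$ is empty.

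This last step is where I expect the real work to be, and I would organize it by induction on $n$, peeling off the largest index $m:=\max T$: the equation indexed by $m$ isolates the terms $\gamma_{m,i}$ with $i<m$, whose values depend only on the parity of $i$ relative to $m$, and comparing it against the equation for the next element of $T$ should force a parity contradiction, collapsing $T$ to a strictly smaller support covered by the inductive hypothesis. The base cases $n\le 2$ are immediate: $\omega_0^2\equiv 0$ and $\omega_1^2(x)=x_0$, so $H_c$ is visibly injective, matching the causal picture of Figure~\ref{fig:cprocess}. The main obstacle is thus not the reduction but controlling the interplay between the quadratic weight terms $\binom{|T\setminus\{k\}|}{2}$ and the tournament defined by $\gamma$; getting the parity bookkeeping right across the two order/parity cases in \eqref{eq:chi} is the crux.
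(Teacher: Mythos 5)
Your reduction is sound, and it is a genuinely different route from the paper's proof (which runs an induction on the number of parties via reduced functions, Lemmas~\ref{lemma:transitivity} and~\ref{lemma:3to4}, and needs the auxiliary family~$\Gamma_{\alpha,\beta}$ of Appendix~\ref{app:gamma} to close that induction): since every local map on a bit is affine, $f_k(b)=a_k\oplus c_k b$, the process-function condition is indeed equivalent to injectivity of $H_c(x)=\bigl(x_k\oplus c_k\,\omega_k^n(x)\bigr)_{k\in\mathbb Z_n}$ for every $c$, and, setting $d=x\oplus x'$ with support~$T$, injectivity amounts to refuting, for every nonempty~$T\subseteq\mathbb Z_n$ and every~$x$, the system
\begin{align}
	\forall k\in T:\quad \omega_k^n(x)\oplus\omega_k^n(x\oplus\mathbf 1_T)=1\,.
\end{align}
The genuine gap is that you never refute this system. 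The paragraph that should carry the proof instead says that the incompatibility ``should force a parity contradiction'' by induction on~$n$ after peeling off~$\max T$. That is a hope, not an argument, and the sketched mechanism is unlikely to work as stated: removing one index from~$T$ flips the parity of~$|T|$, and the algebraic shape of the system changes qualitatively with that parity, so a one-index-at-a-time induction has no evident invariant to carry.

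To see what is actually needed, expand the difference for~$k\in T$:
\begin{align}
	\omega_k^n(x)\oplus\omega_k^n(x\oplus\mathbf 1_T)
	=
	\binom{|T|-1}{2}
	\oplus
	\begin{cases}
		\bigoplus_{i\in T\setminus\{k\}}x_i & |T| \text{ odd}\\[2pt]
		\bigoplus_{i\notin T}x_i & |T| \text{ even}
	\end{cases}
	\;\oplus\;
	\bigoplus_{i\in T}\gamma_{k,i}
	\,,
\end{align}
where the binomial coefficient is taken modulo two. The two parities require two different one-line arguments, neither of which appears in your proposal. For~$|T|$ odd, sum all~$|T|$ equations: each~$x_i$ with~$i\in T$ occurs~$|T|-1$ times (an even number), so the~$x$-dependence cancels; since~$\gamma_{k,i}\oplus\gamma_{i,k}=1$ for~$i\neq k$, the~$\gamma$-terms contribute~$\binom{|T|}{2}$, and one gets~$1\equiv_2\binom{|T|-1}{2}+\binom{|T|}{2}\equiv_2 |T|-1\equiv_2 0$, a contradiction. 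For~$|T|$ even, the~$x$-term is the \emph{same} in every equation (so summation gives nothing when~$4$ divides~$|T|$); instead, XOR the two equations for elements~$k<k'$ that are consecutive in~$T$: a direct check from Equation~\eqref{eq:chi} gives~$\bigoplus_{i\in T}\bigl(\gamma_{k,i}\oplus\gamma_{k',i}\bigr)=1$, hence~$0=1$, again a contradiction. These two moves---global summation for odd support, comparison of consecutive indices for even support---are the actual content of the theorem in your formulation; without them the proposal is a correct change of variables followed by an unproven combinatorial claim.
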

This theorem---proven in Section~\ref{proofs:violation}---immediately implies that the game~$G_n$ is won deterministically in this framework~(see Figure~\ref{fig:pt}): 
\begin{corollary}[Non-causal value of~$G_n$]
	The value of the game~$G_n$ with classical-process behaviors is
	\begin{align}
		\max_{P_{A\mid X}\in\mathcal C^n_{\textnormal{cprocess}}} \Pr\left[ A=\omega^n(X) \right]
		= 1
		\,.
	\end{align}
\end{corollary}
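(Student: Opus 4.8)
The plan is to establish the two inequalities separately. The bound $\max \le 1$ is immediate, as $\Pr[A = \omega^n(X)]$ is a probability. All the content lies in the reverse inequality: I must exhibit a single deterministic process-function behavior that wins $G_n$ with certainty, which then gives $\max \ge 1$. Since the winning probability is linear in the behavior, its maximum over the convex hull $\mathcal{C}^n_\textnormal{cprocess}$ is attained at a deterministic extremal behavior, so one such behavior suffices.

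For the construction I would invoke Theorem~\ref{thm:lc}, which guarantees that $\omega^n$ is an $n$-party process function with $\mathcal{O}_k = \mathcal{I}_k = \mathbb{Z}_2$. The idea is to choose interventions that reduce the game to the fixed-point computation carried out by $\omega^n$. Concretely, each party~$k$ forwards its input to the environment and reports back the system it receives: I set $\beta_k(x_k, i_k) := x_k$ and $\alpha_k(x_k, i_k) := i_k$, so that $\mu_k = (\alpha_k, \beta_k)$.

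With these interventions, for any fixed input~$x$ the induced map $i \mapsto \beta(x, i)$ is the constant map sending every~$i$ to~$x$. The fixed-point condition $\omega^n(\beta(x, i)) = i$ then reads $\omega^n(x) = i$, whose unique solution---guaranteed by the process-function property of~$\omega^n$---is $i = \omega^n(x)$. Feeding this back into the output rule gives $a = \alpha(x, i) = i = \omega^n(x)$. Hence the behavior produced by these interventions satisfies $P_{A\mid X}(a, x) = 1$ exactly when $a = \omega^n(x)$, which is precisely the winning condition of~$G_n$.

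Therefore this deterministic behavior lies in $\mathcal{C}^n_\textnormal{cprocess}$ and wins $G_n$ with probability~$1$, completing the lower bound and hence the claim. I anticipate no substantial obstacle: once Theorem~\ref{thm:lc} supplies the process-function property, the only remaining work is the direct substitution verifying that the forward-and-report interventions pin the fixed point to~$\omega^n(x)$. The one subtlety worth stating carefully is the reinterpretation of roles---the game inputs~$x$ become the systems released into the environment ($\mathcal{O}$), while the game outputs~$\omega^n(x)$ become the systems received back ($\mathcal{I}$)---so that the identity intervention on the received system reproduces the desired output.
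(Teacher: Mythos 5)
Your proposal is correct and follows essentially the same route as the paper: the interventions $\mu_k:(x_k,i_k)\mapsto(i_k,x_k)$---each party relaying its game input to the environment and reporting the received system as its guess---are exactly the paper's construction, with your fixed-point verification merely spelling out what the paper leaves implicit. The trivial upper bound and the remark on linearity over the convex hull are fine additions but not needed beyond exhibiting one deterministic winning behavior.
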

\begin{proof}
	Let the~$n$ parties have access to the process function~$\omega^n$ of the game~$G_n$.
	Each party~\mbox{$k\in\mathbb Z_n$} relays the input~$x_k$ to the environment~$o_k=x_k$, and uses the input~$i_k$ from the environment as guess~$a_k$.
	Formally, party~$k$ implements the function~\mbox{$\mu_k:(x_k,i_k) \mapsto (i_k,x_k)$,}
	by which the behavior~$P_{A\mid X}$ that equals~$\omega^n$ is obtained.
\end{proof}
This corollary shows that for any number of parties, the classical-deterministic limit of the process-matrix framework leads to genuinely multi-party non-causal correlations~($\forall n\geq 3: \mathcal C^n_\textnormal{cprocess}\not\subseteq \mathcal C^n_\textnormal{bi-causal}$),
and that the violation is bounded by a constant ($\forall n\geq 3:$ the gap is at least~$1/4$).
While previous games~\cite{Abbott2017} share the former feature, their non-causal correlations are limited---the gap vanishes for increasing number of parties.

\subsection{Quantum violations}
In the process-matrix framework~\cite{Oreshkov2012}, each party~$k$ receives a quantum state on the Hilbert space~$\mathcal I_k$, applies a quantum instrument~$\mu_k=\{\mu_k^{a,x}\}_{(a,x)\in\mathcal A_k\times \mathcal X_k}$, and releases a quantum state on the Hilbert space~$\mathcal O_k$.
A~quantum instrument~$\mu_k$ is a family of completely positive trace-non-increasing maps from~$\mathcal L(\mathcal I_k)$ to~$\mathcal L(\mathcal O_k)$ such that~$\forall x\in \mathcal X_k$, the map~$\sum_{a\in\mathcal A_k} \mu_k^{a,x}$ is trace preserving.
\begin{definition}[Process matrix and process-matrix behaviors~\cite{Oreshkov2012}]
	An~$n$-party {\em process matrix\/} is a positive-semi-definite matrix~$W\in\mathcal L(\mathcal O\otimes\mathcal I)$ for some Hilbert spaces~$\mathcal O=\bigotimes_{k\in\mathbb Z_n}\mathcal O_k$,~$\mathcal I=\bigotimes_{k\in\mathbb Z_n}\mathcal I_k$ such that
	\begin{align}
		\forall\mu:
		\Tr\left[\bigotimes_{k\in\mathbb Z_n} M_k W\right] = 1
		\,,
		\label{eq:pm}
	\end{align}
	where~\mbox{$\mu=(\mu_k:\mathcal L(\mathcal I_k)\rightarrow\mathcal L(\mathcal O_k))_{k\in\mathbb Z_n}$} is a family of completely positive trance-preserving maps, and where~$M_k$ is the Choi operator\footnote{Note that in the process-matrix framework, the Choi operator of a map~$\mathcal E$ is defined as~$[\id\otimes\mathcal E(\doubleket\id\doublebra\id)]^T$, whereas some define the Choi operator with a {\em partial\/} transpose only, or without transpose. For our results, this distinction is irrelevant.}~\cite{Choi1975,Jamiokowski1972} of~$\mu_k$.
	An~$n$-party behavior~$P_{A\mid X}$ is a {\em process-matrix behavior\/} if and only if there exists some~$n$-party process matrix~$W$ and quantum instruments~$\mu=(\mu_k)_{k\in\mathbb Z_n}$ such that
	\begin{align}
		P(a\mid x) = Tr\left[\bigotimes_{k\in\mathbb Z_n} M_k^{a_k,x_k}W\right]
		\,.
	\end{align}
	The set of~$n$-party process-matrix behaviors is~$\mathcal C^n_\textnormal{qprocess}$.
\end{definition}
Just as in the classical case, the process matrix accounts for the interaction among the parties.
A~process matrix yields a conditional probability distribution under any choice of quantum instruments (interventions) of the parties:
No matter what experiment the parties perform, and even if they share entangled states,\footnote{Allowing
	the parties to share entangled states forces~$W$ to be positive semi-definite.
	Oreshkov, Costa, and Brukner~\cite{Oreshkov2012} show that if the Hilbert space~$\mathcal O$ is trivial and the parties share arbitrary entangled states, then every process matrix is a quantum state (see~Figure~\ref{fig:nonsignaling}).
	In contrast, Barnum {\it et al.\/}~\cite{Barnum2010} and Ac\'in {\it et al.\/}~\cite{Acin2010} show that if~$\mathcal O$ is trivial but the parties {\em do not\/} share quantum states, then~$W$ is positive on pure tensors as opposed to positive semi-definite; for three parties or more, this yields correlations beyond the quantum set.}
the probabilities of their observations are well-defined.
The example illustrated in Figure~\ref{fig:cprocess} is obtained by the process matrix~$W=\ket 0\bra 0_{I_0}\otimes \doubleket{\id}\doublebra{\id}_{O_0,I_1}\otimes\id_{O_1}$, where~$\doubleket{\id}$ is the non-normalized maximally entangled state~$\sum_o\ket o\ket o$, where~$\mathcal O_0\simeq\mathcal I_1$, and the sum is taken over a basis of~$\mathcal O_0$.

\begin{theorem}
	\label{thm:qc}
	If~$\omega:\mathcal O\rightarrow\mathcal I$ is an~$n$-party process function, then
	\begin{align}
		W:=
		\sum_{o\in\mathcal O}
		\ket o
		\bra o_O
		\otimes
		\ket{\omega(o)}
		\bra{\omega(o)}_I
		\label{eq:bijection}
	\end{align}
	is an~$n$-party process matrix.
\end{theorem}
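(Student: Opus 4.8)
The plan is to verify directly the two defining conditions of a process matrix: positive semi-definiteness of $W$, and the normalization $\Tr[\bigotimes_{k} M_k W]=1$ for every family of completely positive trace-preserving maps $\mu=(\mu_k)_{k\in\mathbb Z_n}$ with Choi operators $M_k$. Positivity is immediate: since $W=\sum_{o}\ket{o}\bra{o}_O\otimes\ket{\omega(o)}\bra{\omega(o)}_I=\sum_{o}\ket{o,\omega(o)}\bra{o,\omega(o)}$ is a sum of mutually orthogonal rank-one projectors (the $\ket{o}$ form an orthonormal family), $W$ is itself a projector and in particular $W\geq 0$. The structural feature I will exploit is that $W$ is diagonal in the computational basis, with diagonal entry $1$ at $\ket{o,\omega(o)}$ for each $o\in\mathcal O$ and $0$ elsewhere.

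For the normalization, diagonality of $W$ means only the diagonal entries of $\bigotimes_k M_k$ contribute, so that $\Tr[\bigotimes_k M_k W]=\sum_{o\in\mathcal O}\bra{o,\omega(o)}\bigotimes_k M_k\ket{o,\omega(o)}$, and each such diagonal entry factorizes across parties. I would then compute the diagonal of each Choi operator: setting $q_k(o_k,i_k):=\bra{o_k}\mu_k(\ket{i_k}\bra{i_k})\ket{o_k}$ (the transpose in the Choi convention leaves diagonal entries unchanged, consistent with the paper's footnote), trace-preservation of $\mu_k$ gives $\sum_{o_k}q_k(o_k,i_k)=\Tr[\mu_k(\ket{i_k}\bra{i_k})]=1$ for each fixed $i_k$. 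Hence $q_k(\cdot\mid i_k)$ is a probability distribution over $\mathcal O_k$, i.e.\ a classical stochastic map $\mathcal I_k\to\mathcal O_k$, and the goal reduces to proving $\sum_{o}\prod_k q_k(o_k,\omega(o)_k)=1$ for every product of stochastic maps.

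The key idea is to reduce to deterministic functions and then invoke the unique-fixed-point property of $\omega$. Every finite stochastic map is a convex combination of deterministic functions $f_k:\mathcal I_k\to\mathcal O_k$, so by multilinearity it suffices to show $\sum_{o}\prod_k \delta_{o_k,f_k(\omega(o)_k)}=\#\{o:o=f(\omega(o))\}=1$ for each product function $f=(f_k)_k$. For this I would observe that $i\mapsto f(i)$ and $o\mapsto\omega(o)$ restrict to mutually inverse bijections between the fixed points of $o\mapsto f(\omega(o))$ in $\mathcal O$ and the fixed points of $i\mapsto\omega(f(i))$ in $\mathcal I$: if $i=\omega(f(i))$ then $o:=f(i)$ satisfies $f(\omega(o))=f(\omega(f(i)))=f(i)=o$, and symmetrically in the other direction. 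Since $\omega$ is a process function, the latter fixed-point set has exactly one element for every $f$, so the former does as well, yielding the count $1$; the convex weights multiply to $1$ across all parties, so the full sum equals $1$.

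I expect the main obstacle to be conceptual rather than computational: recognizing that the process-function condition, stated as uniqueness of fixed points of $\omega\circ f$ on the input space $\mathcal I$, transfers to uniqueness of fixed points of $f\circ\omega$ on the output space $\mathcal O$, which is the form that actually surfaces once $W$ is contracted against the diagonalized Choi operators. Once this correspondence and the reduction of arbitrary quantum instruments to convex mixtures of deterministic classical functions are in place, the remaining steps are routine.
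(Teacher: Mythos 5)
Your proof is correct, and it differs from the paper's in a way worth noting. Both arguments share the same opening move: since $W$ is diagonal in the computational product basis, only the diagonal entries of $\bigotimes_k M_k$ contribute to $\Tr[\bigotimes_k M_k W]$, and these diagonal entries are exactly classical stochastic maps --- your $q_k$, the paper's dephased operator $M'_\text{diag}$ (which, being positive and satisfying $\Tr_O M'_\text{diag}=\id$, is again a valid product intervention). At that point the paper stops and invokes the limit theorem of earlier work as a black box: that theorem already asserts that a process function, encoded as a diagonal matrix via Equation~\eqref{eq:bijection}, satisfies the normalization condition against all interventions diagonal in the same basis. You instead prove this classical statement from scratch: you decompose each stochastic map $q_k$ as a convex combination of deterministic functions $f_k:\mathcal I_k\to\mathcal O_k$, reduce by multilinearity to a fixed deterministic $f$, and establish the count $\#\{o\in\mathcal O : o=f(\omega(o))\}=1$ by observing that $f$ and $\omega$ restrict to mutually inverse bijections between the fixed points of $\omega\circ f$ on $\mathcal I$ (unique, by the process-function property) and the fixed points of $f\circ\omega$ on $\mathcal O$ (which is the set that actually appears after contracting $W$ against the diagonal Choi operators). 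This transfer lemma is precisely the content hidden inside the cited limit theorem; making it explicit is the genuine added value of your route. The paper's proof is shorter and leverages existing machinery; yours is elementary and self-contained, and isolates exactly where the unique-fixed-point hypothesis on $\omega$ enters the quantum statement.
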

This theorem---shown in Section~\ref{proofs:quantum}---implies that every process-function behavior is a process-matrix behavior~(see Figure~\ref{fig:pt}), and therefore
\begin{align}
	\max_{P_{A\mid X}\in\mathcal C^n_\textnormal{qprocess}} \Pr\left[ A=\omega^n(X) \right]
	= 1
	\,.
\end{align}
Moreover, from Ref.~\cite{Baumeler2016fp} it is known that every process function and every mixture of process functions is embeddable into a {\em reversible\/} process function with two additional parties:
A party in the global past with~a~trivial input, and a party in the global future with a trivial output.
By the above theorem, the same behaviours are unitarily extensible~\cite{Araujo2017purification}---the corresponding process matrices can be extended to unitary dynamics.
This does not hold for all process matrices.
Moreover, Barrett, Lorenz, and Oreshkov~\cite{Barrett2020} show that every unitarily extensible two-party process matrix is causal.

Note that the above theorem is not trivial:
It is not proven by referring to the {\em classical-deterministic limit\/} of process matrices.
The reason for this is the following.
Call an operator~$W$ a {\em deterministic-diagonal process matrix\/} if and only if~$W$ has zero-one entries only and satisfies Equation~\eqref{eq:pm} {\em restricted\/} to interventions~$\bigotimes_{j\in\mathbb Z_n} M_j^{a_j,x_j}$ diagonal in the same basis as~$W$.
The limit theorem~\cite{Baumeler2016,Baumelerphd} states that there exists a bijection (through Equation~\eqref{eq:bijection}) between process functions and deterministic-diagonal process matrices.
{\em A~priori,\/} however, it could be that some deterministic-diagonal process matrices are {\em not\/} process matrices;
it could be that there exist {\em quantum\/} instruments~$\mu$ such that the probabilities are not well-defined for some deterministic-diagonal process matrices.
This potentiality arises because the {\em enlargement\/} of the set of possible interventions from functions to quantum instruments {\em restricts\/} the set of objects~$W$ that satisfy Equation~\eqref{eq:pm}.
While a few process functions and stochastic processes~(probabilistic generalizations of process functions) were translated {\em case-by-case\/} to process matrices~\cite{ClassicalNC1,Baumelerphd,Araujo2017purification}, the above theorem was unknown.

\section{Discussion}
Violations of causal order arise naturally---not in the sense that such violations are known to arise in our physical world, but in the sense that they arise and are advocated by our best physical theories.
This leaves open a binary alternative:
Either causal order does {\em not\/} hold in our physical world, or these violations are mathematical artifacts of our theories.
To show the latter, one must describe a reasonable and physical principle from which causal order is reestablished.
All attempts so far, however, fail to do so, and leave us with the ``chronology protection conjecture''~\cite{Hawking1992}.
While this conjecture is well motivated, it also leaves room to speculate that {\em ``acausal''\/} dynamics arise in yet unprobed physical regimes, {\it e.g.,}~in the minuscule of quantum foam or in black holes~\mbox{\cite{Thorne1991a,Thorne,Svetlichny2005,wheeler2010geons}}.
Overall, our results show that causal order cannot be derived in the classical limit~\cite{ClassicalNC1}, with the requirement of determinism and reversibility~\cite{Baumeler2016fp}, with the~$\cc{NP-hardness}$ assumption~\cite{Aaronson2005,Baumeler2018}, and also not---as we show here---in the many-body limit.
Note that the process functions and process matrices described here can be made reversible~\cite{Baumeler2016,Araujo2017purification}.
In the context of general relativity~\cite{Baumeler2017,Tobar2020}, our findings present non-trivial and reversible closed time-like curves that traverse any number of local space-time regions.

\subsection{Relation to non-locality}
We not only rule out that ``acausal'' processes must be restricted to few regions, but also establish a link between non-causal and non-local correlations.
To see this, we briefly describe the Ardehali-Svetlichny non-local game.\footnote{The three-body Bell inequality was introduced by Svetlichny~\cite{Svetlichny1987} and generalized by Ardehali~\cite{Ardehali1992} based on insights by Mermin~\cite{Mermin1990}. A reformulation of these inequalities as a game is found in the work by Ambainis {\it et al.\/}~\cite{Ambainis2010,Ambainis2013}.}
This game is played among~$n$ parties that are space-like separated---they cannot communicate (see Figure~\ref{fig:nonsignaling}).
\begin{figure}
	\centering
	\begin{tikzpicture}
		\def\knobx{-0.6}
		\def\knoby{0}
		\def\alpha{45}
		\def\kA{2}
		\def\kB{10}
		\def\kC{5}
		\def\mA{140}
		\def\mB{80}
		\def\mC{60}
		\coordinate (A) at (0,0);
		\party{A}{\alpha}
		\node at (A) {\tiny $\mu^{a,x}$};
		\coordinate (AK) at ($ (A) + (\knobx,\knoby) $);
		\knob{AK}{\kA}{$X$}
		\coordinate (AM) at ($ (A) + (-\knobx,-\knoby) $);
		\meter{AM}{\mA}{$A$}
		\coordinate (B) at ($ (AR) + (1.5cm, 0) $);
		\party{B}{\alpha}
		\node at (B) {\tiny $\nu^{b,y}$};
		\coordinate (BK) at ($ (B) + (\knobx,\knoby) $);
		\knob{BK}{\kB}{$Y$}
		\coordinate (BM) at ($ (B) + (-\knobx,-\knoby) $);
		\meter{BM}{\mB}{$B$}
		\coordinate (C) at ($ (BR) + (1.5cm, 0) $);
		\party{C}{\alpha}
		\node at (C) {\tiny $\tau^{c,z}$};
		\coordinate (CK) at ($ (C) + (\knobx,\knoby) $);
		\knob{CK}{\kC}{$Z$}
		\coordinate (CM) at ($ (C) + (-\knobx,-\knoby) $);
		\meter{CM}{\mC}{$C$}
		\def\offset{0.2}
		\def\offangle{5}
		\def\extrax{0.4}
		\def\extray{1}
		\def\c{violet!10!white}
		\path[draw,\c,fill=\c] ($ (AL) + (-\extrax,-\extray) $) -- ($ (AL) + (-\extrax,0) $)
		-- ($ (AL) - (\offset,0) $) to[out=-\alpha-\offangle,in=180+\alpha+\offangle] ($ (AR) + (\offset,0) $)
		-- ($ (BL) - (\offset,0) $) to[out=-\alpha-\offangle,in=180+\alpha+\offangle] ($ (BR) + (\offset,0) $)
		-- ($ (CL) - (\offset,0) $) to[out=-\alpha-\offangle,in=180+\alpha+\offangle] ($ (CR) + (\offset,0) $)
		-- ($ (CR) + (\extrax,0) $) -- ($ (CR) + (\extrax,-\extray) $)
		-- cycle;
		\draw[>=stealth,double,<-] (AB.center) -- ++(0,-0.15) node[below,left] {\small $\mathcal I_A$};
		\draw[>=stealth,double,<-] (BB.center) -- ++(0,-0.15) node[below,left] {\small $\mathcal I_B$};
		\draw[>=stealth,double,<-] (CB.center) -- ++(0,-0.15) node[below,left] {\small $\mathcal I_C$};
	\end{tikzpicture}
	\caption{
		Each party (region) performs an experiment on a~system.
		The parties, however, are assumed to be space-like separated, and therefore cannot communicate.
		In a classical setup, the systems are described by random variables, and in the quantum case, by quantum states.
	}
	\label{fig:nonsignaling}
\end{figure}
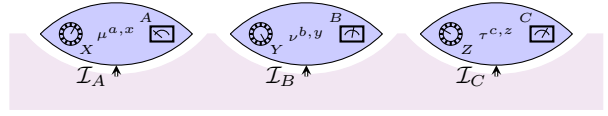
\begin{definition}[Ardehali-Svetlichny function]
	\label{def:svetfunction}
	An~{\em $m$-ary Ardehali-Svetlichny function\/} is a function~$S_\lambda^m$ where
	\begin{align}
		\begin{split}
			S^m_\lambda: \mathbb Z_2^m &\rightarrow \mathbb Z_2\\
			z &\mapsto
			\bigoplus_{\substack{i,j\in\mathbb Z_m\\i<j}}
			z_iz_j
			\oplus
			\bigoplus_{i\in\mathbb Z_m}
			\lambda_i
			z_i
			\,,
		\end{split}
	\end{align}
	for some~$\lambda\in\mathbb Z_2^m$.
\end{definition}
\begin{asgame*}
	Every party~$k\in\mathbb Z_n$ receives a uniformly distributed binary random variable~$X_k$ and must produce a random variable~$A_k$ such that
	\begin{align}
		\bigoplus_{i\in\mathbb Z_n} A_i = S_\lambda^n(X)
		\,.
	\end{align}
\end{asgame*}
\begin{lemma}[Ardehali-Svetlichny inequalities~\cite{Collins2002,Seevinck2002,Ardehali1992}]
	\label{lemma:svetlichny}
	The winning probability
	\begin{align}
		\Pr\left[ \bigoplus_{i\in\mathbb Z_n} A_i = S_\lambda^n(X) \right]
	\end{align}
	of the Ardehali-Svetlichny game for~$n\geq 2$ is
	\begin{itemize}
		\item upper bounded by~$1/2 + 1/2^{\floor{n/2}+1}$ for local behaviors~$P_{A\mid X}$,
		\item upper bounded by~$3/4$ for Svetlichny bi-local behaviors~$P_{A\mid X}$,
		\item and reaches the Tsirelon bound~\cite{TsirelsonsBound}~$(2+\sqrt 2)/4$ for quantum behaviors~$P_{A\mid X}$.
	\end{itemize}
\end{lemma}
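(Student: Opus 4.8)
The plan is to treat all three bounds as linear functionals over their respective correlation sets and to exploit that the winning probability of an XOR game is an affine image of a single averaged correlator. For any behaviour one has
\begin{align}
\Pr\!\left[\bigoplus_{i} A_i = S^n_\lambda(X)\right]
= \frac12 + \frac12\,\mathbb{E}_X\!\left[(-1)^{S^n_\lambda(X)}\Big\langle\prod_{i\in\mathbb Z_n}(-1)^{A_i}\Big\rangle_X\right],
\end{align}
so it suffices to bound the averaged $n$-partite correlator over each set. As the local and Svetlichny-bilocal sets are polytopes and the functional is linear, it is enough to optimise over their deterministic extreme points; for the quantum set I would invoke the standard XOR-game (Tsirelson) characterisation.

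For the \emph{local} bound I would first note that a deterministic local strategy outputs $A_i = a_i(X_i)$, and every function $\mathbb Z_2\to\mathbb Z_2$ is affine, so $\bigoplus_i A_i$ is a \emph{global affine} function of $X$. Hence the winning probability equals $\tfrac12+\tfrac12\,\textrm{bias}$, where the bias is the best agreement between the quadratic form $Q(x)=\bigoplus_{i<j}x_ix_j$ (the affine part $\bigoplus_i\lambda_i x_i$ being absorbed) and an affine function, i.e.\ $\max_{s}|\widehat{(-1)^{Q}}(s)|$. The core computation is the Walsh--Fourier spectrum of $Q$: the associated symmetric bilinear form is the all-ones off-diagonal matrix, whose $\mathbb F_2$-rank is $n$ for even $n$ and $n-1$ for odd $n$; the corresponding Gauss sum then has magnitude $2^{\,n-\lfloor n/2\rfloor}$, giving maximal Fourier weight $2^{-\lfloor n/2\rfloor}$ and thus the bound $\tfrac12+2^{-(\lfloor n/2\rfloor+1)}$. \textbf{This rank/Gauss-sum evaluation is the main technical obstacle}; an equivalent route is a Mermin-type recursion reducing $Q$ on $n$ variables to $Q$ on $n-2$ variables.

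For the \emph{Svetlichny bilocal} bound I would fix one proper bipartition $(\mathcal T,\mathcal T^{c})$ --- convexity reduces the mixture to a single partition --- and split the target as $Q(x)=Q_{\mathcal T}(x_{\mathcal T})\oplus Q_{\mathcal T^{c}}(x_{\mathcal T^{c}})\oplus\big(\bigoplus_{i\in\mathcal T}x_i\big)\wedge\big(\bigoplus_{j\in\mathcal T^{c}}x_j\big)$, using that a product of integers reduces mod $2$ to the product of residues. Since each group may realise arbitrary correlations among its own parties, every within-group term is computed exactly and absorbed into that group's output XOR; what survives is a two-party game between the super-parties, with uniform independent inputs $u=\bigoplus_{\mathcal T}x_i$, $v=\bigoplus_{\mathcal T^{c}}x_j$ and target $u\wedge v$. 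This is the AND game, whose classical (shared-randomness, no cross-communication) value is $3/4$, yielding the claim after maximising over partitions.

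For the \emph{quantum} value I would exhibit the optimal GHZ strategy of Ardehali: with a shared $n$-qubit GHZ state and local measurements in suitably rotated equatorial bases, the averaged correlator attains $1/\sqrt2$, i.e.\ winning probability $(2+\sqrt2)/4$. For optimality I would invoke the Tsirelson bound for XOR games, under which the Ardehali--Svetlichny functional cannot exceed $1/\sqrt2$, matching the attained value. The deepest ingredient is this Tsirelson upper bound, but as it is a standard operator-norm/SOS fact for XOR games I would cite it rather than reprove it, consistent with attributing the lemma to Refs.~\cite{Collins2002,Seevinck2002,Ardehali1992}.
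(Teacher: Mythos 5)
You should first be aware that the paper itself does not prove this lemma: it is imported wholesale from Refs.~\cite{Collins2002,Seevinck2002,Ardehali1992} (the citations in the lemma's title are its ``proof''), and the paper only ever invokes the first bullet, inside the proof of Theorem~\ref{thm:bicausalinequality}. So there is no internal argument to compare against; what can be judged is whether your sketch correctly reconstructs the cited results, and in the main it does. Your step (i) is the Mermin--Ardehali statement that $Q(x)=\bigoplus_{i<j}x_ix_j$ has correlation at most $2^{-\lfloor n/2\rfloor}$ with affine functions: your rank computation is correct (over $\mathbb F_2$ the kernel of the all-ones-off-diagonal matrix is trivial for $n$ even and $\{0,\mathbf 1\}$ for $n$ odd), and Dickson's theorem that all nonzero Walsh coefficients of a quadratic form of rank $r$ have magnitude $2^{-r/2}$ closes the flagged technical debt; together with convexity over deterministic points this gives exactly $1/2+2^{-\lfloor n/2\rfloor-1}$. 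Your step (ii), splitting $Q=Q_{\mathcal T}\oplus Q_{\mathcal T^c}\oplus(u\wedge v)$, absorbing within-group terms (Svetlichny bilocality permits arbitrary correlations inside a group), and reducing to the CHSH/AND game of value $3/4$, is precisely the argument of the cited references; the only point you gloss over is that each super-party's surplus input information beyond $u$ (resp.\ $v$) is independent of the pair $(u,v)$, so conditioning on it reduces the analysis to deterministic CHSH strategies. Incidentally, the ``Mermin-type recursion'' you mention as an alternative to the Fourier route is exactly the counting recursion $|\mathcal Z^{n+2}|=2|\mathcal Z^n|+2^n$ that the paper does carry out in its proof of Theorem~\ref{thm:faces}, so that route would be the one most consistent with the paper's own toolkit.

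The one step you must repair is the quantum optimality claim. There is no ``Tsirelson bound for XOR games'' you can invoke in the multipartite setting: Tsirelson's theorem and its SDP characterization are strictly bipartite statements, and for three or more players XOR games admit no such characterization (indeed, violations of general multipartite correlation inequalities can be unbounded, so no universal $1/\sqrt 2$-type black box exists). The correct argument is the one your own step (ii) sets up: across any bipartition, the Ardehali--Svetlichny functional becomes a CHSH functional between two super-parties whose effective observables---averages over the surplus inputs of products of local $\pm 1$ observables---are Hermitian with operator norm at most one, and the bipartite Tsirelson inequality $\lVert A_0\otimes(B_0+B_1)+A_1\otimes(B_0-B_1)\rVert\le 2\sqrt 2$ holds for such operators. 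This yields the correlator bound $1/\sqrt 2$, hence the value $(2+\sqrt 2)/4$, matched by Ardehali's GHZ construction; this is also how Refs.~\cite{Collins2002,Seevinck2002} prove it. Since the lemma's third bullet only asserts that quantum behaviors \emph{reach} $(2+\sqrt 2)/4$, the GHZ construction alone already suffices for the statement as written, but as your proposal explicitly argues optimality, it should do so via the bipartition reduction rather than a nonexistent multipartite theorem.
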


A {\em first\/} connection between the game~$G_n$ and the Ardehali-Svetlichny game is that~$\omega^n_k(x) =  S^{n-1}_\lambda(x_{\setminus k})$ for~$\lambda$ being the alternating string~$(0,1,0,1,\dots)$ when~$k$ is even, and~$(1,0,1,0,\dots)$ when~$k$ is odd.
While in~$G_n$, {\em each\/} party must guess the value of an Ardehali-Svetlichny function, in the Ardehali-Svetlichny game, the parties must {\em jointly\/} guess an Ardehali-Svetlichny function.
A {\em second\/} connection is that~$G_n$ can be used to certify genuinely multi-party {\em non-causal\/} correlations, and the Ardehali-Svetlichny game can be used to certify genuinely multi-party {\em non-local\/} correlations.\footnote{Note that there are alternative definitions of genuinely multi-party non-local correlations~\cite{Bancal2013,Gallego2012}.
The definition as {\em not Svetlichny bi-local\/} is, however, the most conservative one.}
{\em Third,\/} the bi-causal (and therefore also the causal) bound of~$G_n$ is essentially the local bound of the Ardehali-Svetlichny game.
And {\em finally,\/} with the frameworks employed, the parties win~$G_n$ with constant probability, as it is the case for the Ardehali-Svetlichny game in quantum theory.
This last feature implies that non-local and non-causal correlations are {\em unlimited\/} and yield an exponential non-local versus local, and respectively non-causal versus causal advantage for an increasing number of parties~$n$.

Note that we cannot simply connect the results on \mbox{(non-)locality} and (a)causality by considering the {\em same\/} game, but must aim for a {\em translation.\/} 
A reason for this is that the infamous game where every party must guess its neighbours input (GYNI) does not allow for a quantum-over-classical advantage with shared resources~\cite{Almeida2010}, however, it allows for a non-causal-over-causal advantage when played with process matrices~\cite{simplestcausalinequalities}.
The similarities displayed above, however, lead us to ask whether any parity local game is {\em translatable\/} to a causal game.
A parity local game is a game where the parity of the parties' outputs must equal a function of their inputs.
A~proposal for general translations to causal games is to ask {\em each\/} party separately to guess the function value.
We briefly apply this recipe to the CHSH game~\cite{Clauser1969}.
In that two-party parity game, each party is given a binary random variable~$X$ and~$Y$, and produces~$A$~and~$B$, respectively, satisfying~\mbox{$A\oplus B=XY$}.
By following the above recipe, we ask the parties to produce~$A$~and~$B$ such that~$A=XY$ and~$B=XY$.
This resulting game is the {\em lazy guess-your-neighbour's-input game\/}~\cite{simplestcausalinequalities}:~$X(A\oplus Y)=Y(B\oplus X)=0$.
The causal bound for this game is~$3/4$ and coincides with the local bound of the CHSH game we started with.
Two parties using the process-matrix framework violate that causal bound and win this derived game with probability~$0.819$ (the maximal winning probability is unknown)~\cite{simplestcausalinequalities}, the CHSH game is quantum mechanically won with probability at most~$(2+\sqrt 2)/4\approx 0.854$~\cite{TsirelsonsBound}.

\subsection{Geometric and causal structure}
By Theorem~\ref{thm:faces}, the bi-causal inequalities describe faces of the bi-causal polytopes.
However, these inequalities {\em do not\/} describe facets.
This can be seen for~$n=3$.
The hyperplane specified by that inequality is~$7$-dimensional:
\begin{align}
	\begin{split}
		&p_{000}^{000} +
		p_{010}^{001} +
		p_{001}^{010} +
		p_{001}^{011} + \\
		&p_{010}^{100} +
		p_{100}^{101} +
		p_{010}^{110} +
		p_{000}^{111} = 6
		\,,
	\end{split}
\end{align}
where~$p_{abc}^{xyz}:=P_{A, X}(a,b,c,x,y,z)$.
However, there are exactly three bi-causal extremal points that lie on that hyperplane.
For~$n$ parties, the hyperplane specified by the bi-causal inequality is~$(2^n-1)$-dimensional.
We have identified~$n/2$ for~$n$ even, and~$n$ for~$n$ odd, bi-causal extremal points on that hyperplane (see Theorem~\ref{thm:faces} and the proof in Section~\ref{sec:saturation} for the strategies), and leave open whether more optimal strategies exist for~$n\geq 4$

Following the work by Barrett, Lorenz, and Oreshkov~\cite{Barrett2020}, we are in position to describe the causal structure corresponding to the process function~$\omega^n$.
In their article, the authors show that the causal structure of~$\omega^3$ is the fully connected directed graph as shown in Figure~\ref{fig:graph3}, where each node denotes a party, and where an edge~$i\rightarrow j$ indicates that party~$i$ influences party~$j$.
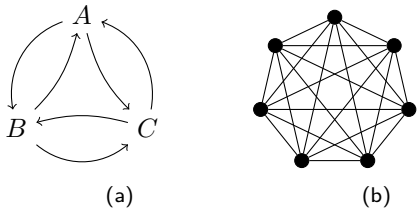
\begin{figure}
	\centering
	\begin{subfigure}[t]{.4\columnwidth}
		\begin{tikzpicture}
			\def\of{40}
			\def\ofi{15}
			\node at (90:1cm) (A) {$A$};
			\node at (90+120:1cm) (B) {$B$};
			\node at (90+240:1cm) (C) {$C$};
			\draw[->] (A) to[out=-120-\of,in=60+\of] (B);
			\draw[->] (B) to[out=0-\of,in=180+\of] (C);
			\draw[->] (C) to[out=120-\of,in=-60+\of] (A);
			\draw[<-] (A) to[out=-120+\ofi,in=60-\ofi] (B);
			\draw[<-] (B) to[out=0+\ofi,in=180-\ofi] (C);
			\draw[<-] (C) to[out=120+\ofi,in=-60-\ofi] (A);
		\end{tikzpicture}
		\caption{\label{fig:graph3}}
	\end{subfigure}
	\begin{subfigure}[t]{.4\columnwidth}
		\begin{tikzpicture}
			\graph[nodes={draw, circle,fill,inner sep=0pt,minimum size=3pt}, clique, n=7, clockwise, radius=1cm]
			{
				1/"$\cdot$",
				2/"$\cdot$",
				3/"$\cdot$",
				4/"$\cdot$",
				5/"$\cdot$",
				6/"$\cdot$",
				7/"$\cdot$"
			};
		\end{tikzpicture}
		\caption{\label{fig:graphK}}
	\end{subfigure}
	\caption{(a) The causal structure of~$\omega^3$ is the fully connected directed graph~\cite{Barrett2020}.
		(b) If undirected edges represent bi-directional edges, then the causal structure of~$\omega^n$ is given by the complete graph~$K_n$ (here,~$n=7$).}
\end{figure}
For~$n$ parties, the causal structure again is such a graph.
If every undirected edge represents a bi-directional edge, then the causal structure of~$\omega^n$ is given by the complete graph~$K_n$ (see Figure~\ref{fig:graphK}).
Every party~$i\in\mathbb Z_n$ influences {\em every other\/} party~$j\in\mathbb Z_n\setminus \{i\}$,
and, no matter what intervention~$f=(f_0,\dots,f_n)$ the parties perform, no party~$k$ influences her or himself.

\subsection{Open questions}
Our findings bring forward a series of open questions.
While the main question {\em ``how to derive causal order without presupposing causal order''\/} stated in the introduction persists, we now ask whether a general translation between Bell non-local correlations and non-causal correlations is possible.
Such a translation enriches the understanding of the ``non-causal'' world---just as the translation between Bell non-locality and contextuality~\cite{Stairs1983,Heywood1983,Renner,Renner2012,Brassard2005a,Cabello2020}~(see Budroni~{\it et al.\/}~\cite{Budroni2021} for a recent preprint on that topic)---and speculatively brings forward experimental setups to violate causal order~({\it cf.\/}~Oreshkov~\cite{Oreshkov2019}, Purves and Short~\cite{Purves2021}, Wechs {\it et al.\/}~\cite{Wechs2021}, and Wechs, Branciard, and Oreshkov~\cite{Wechs2022}).

Because the process functions presented here can be embedded in reversible functions~\cite{Baumeler2016fp}, they are physically implementable {\it e.g.,}~by the means of billiard-ball collisions~\cite{Fredkin1982}.
This yields a physical approach to produce the presented correlations:
Construct closed time-like curves, {\it e.g.,}~with wormholes~\cite{Morris1988,Novikov1989,Frolov1990}, which connect the output of the process function~$\omega^n$ with the past space-time boundaries of the~$n$ regions.
The general-relativistic properties of these space-time geometries, however, are unknown.
More details on this approach are found in the articles by \mbox{Baumeler~{\it et al.\/}~\cite{Baumeler2017},} and by Tobar and Costa~\cite{Tobar2020}.
This latter article moreover provides process functions inequivalent to those presented here for~$n=4$ regions.

To bridge a gap still present between the process-matrix framework and relativity, and towards a theory of quantum gravity, one can ask whether a {\em field-theoretic\/} version of the process-matrix framework allows for violations of causal order.
In such a framework, every localized space-time region can be regarded as a party, and thus, parties may be {\em overlapping.}
The derived non-causal correlations for any number of parties support such a possibility.
In connection to that, Giacomini, Castro-Ruiz, and Brukner~\cite{Giacomini2016} describe process matrices with {\em continuous\/}-variable systems.

Moreover, while we show that the causal structure of the process function~$\omega^n$ is the complete graph~$K_n$, we leave open the question of singling out the causal structures that are attainable with process functions:
Does there exist a graph-theoretic criterion to reject a function as a process function?
The absence of self-loops is a necessary condition (see~Lemma~\ref{lemma:constant} below) but insufficient.

Finally, it is intriguing to further compare the process-matrix framework with process functions.
The reason is that the only known {\em deterministic\/} violations of causal inequalities are present in the quantum and in the classical framework (this is also the case here).
Moreover, all unitarily extendible two-party process matrices are causal~\cite{Barrett2020}, just as it is in the classical case.
Thus, we ask: Is~$\mathcal C^n_\textnormal{qprocess}\setminus \mathcal C^n_\textnormal{causal} \cap\mathcal D = \mathcal C^n_\textnormal{cprocess}\setminus \mathcal C^n_\textnormal{causal} \cap\mathcal D$, where~$\mathcal D$ is the set of deterministic behaviors?
An affirmative answer would imply that in the ``acausal'' regime, quantum and classical theories are equivalent~({\it cf.\/}~Aaronson and Watrous~\cite{Aaronson2009}).

\section{Proofs}
\subsection{Derivation of bi-causal inequalities}
\label{proofs:inequality}
We derive the bi-causal inequalities (Theorem~\ref{thm:bicausalinequality}) in two main steps.
Note that in any bilateral partition~\mbox{$(\mathcal K,\mathbb Z_n\setminus \mathcal K)$} of~$n$ parties, where the parties in~$\mathcal K$ causally precede the remaining parties, every party~$j$ not in~$\mathcal K$ can deterministically guess the random variable~$\omega^n_j(X)$; only the parties in~$\mathcal K$ have to make non-trivial guesses.
Since the parties in~$\mathcal K$ can communicate as they wish, it might be the case that a single party only has to make a non-trivial guess, and that all other parties can base their guesses on that party.
First, we show that this almost never happens.
In the second step, we invoke the bound on the winning probability of the Ardehali-Svetlichny game.

\subsubsection{Guesses cannot be recycled}
We define the event~$\mathcal E^\mathcal K$:
There is some party~$k_0\in\mathcal K$ such that every party~$\ell\in\mathcal K$ can deterministically compute~$\omega_\ell^n(X)$ from~$\omega_{k_0}^n(X)$ and~$X_\mathcal K$.
In other words, there exists some~$k_0\in\mathcal K$ such that the term~$\omega_{k_0}^n(x)\oplus\omega_\ell^n(x)$ for any~$\ell\in\mathcal K$ is {\em independent\/} of the variables~$x_{\setminus \mathcal K}$.
Note that if such a~$k_0$ exists, then clearly~$\forall k,\ell\in\mathcal K$ the expression~$\omega_k^n(x)\oplus\omega_\ell^n(x)$ is independent of~$x_{\setminus \mathcal K}$.
\begin{definition}
	For a non-empty~$\mathcal K\subsetneq\mathbb Z_n$ and~$n\geq 2$, the event~$\mathcal E^\mathcal K$ is
	\begin{align}
		\begin{split}
			\mathcal E^\mathcal K &:=
			\bigg\{
				x_\mathcal K\in\mathbb Z_2^{|\mathcal K|}
				\mid
				\forall k,\ell\in\mathcal K\,
				\exists c\in\mathbb Z_2\,
				\\
				&
				\forall x_{\setminus \mathcal K}\in\mathbb Z_2^{n-|\mathcal K|}
				:
				\omega_k^n(x)\oplus\omega_\ell^n(x)
				=
				c
			\bigg\}
			\,.
		\end{split}
	\end{align}
\end{definition}
We give an upper bound on the probability for this event to occur.
\begin{lemma}
	\label{lemma:occurenceE}
	Let~$\mathcal K\subsetneq\mathbb Z_n$ be non-empty,~$n\geq 2$ and let~$X_\mathcal K$ be a uniformly distributed random variable.
	The probability of the event~$\mathcal E^\mathcal K$ over~$\mathcal X_\mathcal K$ is upper bounded as follows:
	\begin{align}
		\Pr_{\mathcal X_\mathcal K}\left[ 
			\mathcal E^\mathcal K
		\right]
		\leq
		2^{-|\mathcal K|+1}
		\,.
	\end{align}
\end{lemma}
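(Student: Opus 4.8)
The plan is to show that membership of an assignment~$x_\mathcal K$ in~$\mathcal E^\mathcal K$ already forces every pairwise sum~$x_k\oplus x_\ell$ with~$k,\ell\in\mathcal K$ to equal a fixed constant, so that~$x_\mathcal K$ is pinned down up to a single global bit; the bound~$2^{-|\mathcal K|+1}=2\cdot 2^{-|\mathcal K|}$ is then merely the count of two surviving assignments among the~$2^{|\mathcal K|}$ equally likely ones. The first task is to compute the difference~$\omega_k^n(x)\oplus\omega_\ell^n(x)$ for~$k,\ell\in\mathcal K$ in closed form. Writing~$S:=\bigoplus_{j\in\mathbb Z_n}x_j$ and using~$x_k^2=x_k$ over~$\mathbb Z_2$, the quadratic part of~$\omega_k^n$ equals the full quadratic form~$\bigoplus_{i<j}x_ix_j$ together with the correction~$x_k(S\oplus 1)$ accounting for the pairs through~$k$; the full form cancels in the difference, leaving
\begin{align}
	\omega_k^n(x)\oplus\omega_\ell^n(x)
	=(x_k\oplus x_\ell)(S\oplus 1)
	\oplus\bigoplus_{i\in\mathbb Z_n}(\gamma_{k,i}\oplus\gamma_{\ell,i})x_i
	\,.
\end{align}

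Next I would fix~$x_\mathcal K$ and read off the dependence of this expression on the remaining variables~$x_{\setminus\mathcal K}$. Splitting~$S$ into its~$\mathcal K$ and~$\mathbb Z_n\setminus\mathcal K$ contributions and collecting the coefficient of each~$x_i$ with~$i\in\mathbb Z_n\setminus\mathcal K$, the part that varies with~$x_{\setminus\mathcal K}$ is the linear form
\begin{align}
	\bigoplus_{i\in\mathbb Z_n\setminus\mathcal K}
	\big[(x_k\oplus x_\ell)\oplus(\gamma_{k,i}\oplus\gamma_{\ell,i})\big]x_i
	\,.
\end{align}
Being a homogeneous linear form over~$\mathbb Z_2$, it is constant---hence~$\omega_k^n(x)\oplus\omega_\ell^n(x)$ is independent of~$x_{\setminus\mathcal K}$---if and only if it is identically zero, that is, if and only if~$x_k\oplus x_\ell=\gamma_{k,i}\oplus\gamma_{\ell,i}$ for every~$i\in\mathbb Z_n\setminus\mathcal K$.

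Finally, since~$\mathcal K\subsetneq\mathbb Z_n$, there is at least one index~$i_0\in\mathbb Z_n\setminus\mathcal K$. Specialising the condition above to this single~$i_0$ shows that any~$x_\mathcal K\in\mathcal E^\mathcal K$ must satisfy~$x_k\oplus x_\ell=\gamma_{k,i_0}\oplus\gamma_{\ell,i_0}$ for all~$k,\ell\in\mathcal K$; equivalently, the quantity~$x_k\oplus\gamma_{k,i_0}$ takes one common value~$v\in\mathbb Z_2$ across all~$k\in\mathcal K$. Thus~$x_\mathcal K$ is determined by the single free bit~$v$, whence~$|\mathcal E^\mathcal K|\leq 2$ and~$\Pr_{\mathcal X_\mathcal K}[\mathcal E^\mathcal K]\leq 2^{-|\mathcal K|+1}$. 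I expect the only delicate step to be the closed-form simplification of the quadratic part; once that is in place, the observation that a \emph{single} external index~$i_0$ already pins down all pairwise sums makes the remaining counting immediate, and the detailed parity definition of~$\gamma$ never enters the bound.
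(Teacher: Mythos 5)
Your proof is correct and follows essentially the same route as the paper's: both compute $\omega_k^n(x)\oplus\omega_\ell^n(x)$, observe that independence from $x_{\setminus\mathcal K}$ forces the coefficient $(x_k\oplus x_\ell)\oplus(\gamma_{k,i}\oplus\gamma_{\ell,i})$ of every external variable $x_i$, $i\notin\mathcal K$, to vanish, and conclude that $x_\mathcal K$ is thereby pinned down up to a single free bit, so $|\mathcal E^\mathcal K|\leq 2$. The only cosmetic difference is that you derive the difference formula by completing the quadratic form with the correction $x_k(S\oplus 1)$, whereas the paper splits the sums explicitly by membership in $\mathcal K$ (its $\alpha_{k,i}$, $\beta_k$ decomposition); the counting step and the resulting bound are identical.
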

\begin{proof}
	We express~$\omega_k^n(x)$ in a form where the terms involving~$x_\mathcal K$ are separated from the rest:\footnote{We make use of the notation~$i\not\in\mathcal K:\Leftrightarrow i\in\mathbb Z_2^n\setminus\mathcal K$ if no danger of confusion exists.}
	\begin{align}
		\omega_k^n(x) =
		\bigoplus_{\substack{
			i,j\not\in\mathcal K\\
			i<j}}
		x_ix_j
		\oplus 
		\bigoplus_{
			i\not\in\mathcal K
		}
		\alpha_{k,i} x_i
		\oplus
		\beta_k
		\,,
		\label{eq:revised}
	\end{align}
	with
	\begin{align}
		\alpha_{k,i} &:= \gamma_{k,i}
		\oplus
		\bigoplus_{j\in\mathcal K\setminus\{k\}}
		x_j
		\,,
		\\
		\beta_k &:=
		\bigoplus_{\substack{
			i,j\in\mathcal K\setminus\{k\}\\
			i<j
		}}
		x_ix_j
		\oplus
		\bigoplus_{i\in\mathcal K}
		\gamma_{k,i} x_i
		\,,
	\end{align}
	where~$\gamma_{k,i}$ is defined as in Equation~\eqref{eq:chi}.
	This allows us to express~$\omega_k^n(x)\oplus\omega_\ell^n(x)$, for~$k,\ell\in\mathcal K$, compactly as
	\begin{align}
		\bigoplus_{i\not\in\mathcal K}
		x_i\left( 
		\gamma_{k,i} \oplus \gamma_{\ell,i} \oplus x_k \oplus x_\ell
		\right)
		\oplus
		\beta_k
		\oplus
		\beta_\ell
		\,.
	\end{align}
	Therefore, if~$x_\mathcal K\in\mathcal E^\mathcal K$, then
	\begin{align}
		\forall k,\ell\in\mathcal K,i\not\in\mathcal K:
		\gamma_{k,i} \oplus \gamma_{\ell,i} = x_k \oplus x_\ell
		\,,
	\end{align}
	and moreover
	\begin{align}
		\forall k,\ell\in\mathcal K, i,j\not\in\mathcal K: \gamma_{k,i} \oplus \gamma_{\ell,i} = \gamma_{k,j} \oplus \gamma_{\ell,j}
		\,.
	\end{align}
	Thus, for some~$x_\mathcal K \in\mathcal E^\mathcal K$ we can define~$c_{k,\ell}:=\gamma_{k,i_0} \oplus \gamma_{\ell,i_0}$ for an arbitrary~$i_0\not\in\mathcal K$.
	Since~$c_{k,\ell}$ is {\em independent\/} of~$x_\mathcal K$, we have
	\begin{align}
		\forall x_\mathcal K:
		x_\mathcal K\in\mathcal E^\mathcal K \Longrightarrow x_k\oplus x_\ell = c_{k,\ell}
		\,;
	\end{align}
	the bits of every sequence~$x_\mathcal K\in\mathcal E^\mathcal K$ are related by the {\em same\/} constants~$c_{k,\ell}$.
	Now, we pick some~$k_0\in\mathcal K$ and find that~$x_\ell$ for every~$\ell\in\mathcal K$ is uniquely determined by~$x_{k_0}$;
	there are at most two distinct sequences~$x_\mathcal K$ in~$\mathcal E^\mathcal K$.
	Finally, since~$X_\mathcal K$ is uniformly distributed, we obtain
	\begin{align}
		\Pr_{X_\mathcal K}\left[ \mathcal E^\mathcal K \right]
		=
		\frac{|\mathcal E^\mathcal K|}{2^{|\mathcal K|}}
		\leq
		2^{-|\mathcal K|+1}
		\,.
	\end{align}
\end{proof}

\subsubsection{Proof of Theorem~\ref{thm:bicausalinequality} via an application of the Ardehali-Svetlichny inequality}
We prove the bi-causal inequalities by conditioning the winning probabilities on the event~$\mathcal E^\mathcal K$.
In the event~$\mathcal E^\mathcal K$, only one party must guess an Ardehali-Svetlichny function, and otherwise,
{\em at least two\/} parties must guess independent Ardehali-Svetlichny functions.
It turns out that these independent functions are {\em parity relative.}
\begin{lemma}[Guessing two parity-relative functions]
	\label{lemma:parity-relative}
	If~$f:\mathbb Z_2^m\rightarrow\mathbb Z_2$ and~$g:\mathbb Z_2^m\rightarrow\mathbb Z_2$ are two functions that satisfy
	\begin{align}
		g(z) = f(z) 
		\oplus 
		c
		\oplus 
		\bigoplus_{i\in\mathcal S} z_i
		\,,
	\end{align}
	for some constant~$c\in\mathbb Z_2$ and for some non-empty set~$\mathcal S\subseteq\mathbb Z_m$,
	and if the random variable~$Z$ is uniformly distributed, and~$P_{A,B,Z}=P_{A,B}P_Z$,
	then
	\begin{align}
		\forall P_{A,B}:
		\Pr\left[ 
			A=f(Z)
			\wedge
			B=g(Z)
		\right]
		\leq
		\frac{1}{2}
		\,.
	\end{align}
\end{lemma}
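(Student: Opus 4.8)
The plan is to reduce the event that \emph{both} target functions are guessed correctly to a \emph{single} parity constraint on the outputs, and then to observe that this parity is an unbiased coin that is independent of the guesses. We may assume the guesses take values in~$\mathbb Z_2$, since any other value never wins. First I would note that on the event~$\{A=f(Z)\wedge B=g(Z)\}$ the outputs satisfy, by the defining relation between~$f$ and~$g$,
\begin{align}
	A\oplus B = f(Z)\oplus g(Z) = c\oplus\bigoplus_{i\in\mathcal S}Z_i
	\,.
\end{align}
Hence the joint-guessing event is contained in this parity event, so that
\begin{align}
	\Pr\left[A=f(Z)\wedge B=g(Z)\right]
	\leq
	\Pr\left[A\oplus B = c\oplus\bigoplus_{i\in\mathcal S}Z_i\right]
	\,.
\end{align}

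Next I would set~$Y:=\bigoplus_{i\in\mathcal S}Z_i$ and argue that~$Y$ is a uniformly distributed bit. This is exactly where the hypothesis~$\mathcal S\neq\emptyset$ enters: the parity of any non-empty subset of the uniform and independent bits~$Z$ is itself an unbiased coin. Since~$Y$ is a function of~$Z$ alone and, by the assumed factorization~$P_{A,B,Z}=P_{A,B}P_Z$, the pair~$(A,B)$ is independent of~$Z$, the bit~$A\oplus B$ is independent of~$Y$.

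Finally I would evaluate the right-hand side. Writing~$W:=A\oplus B$, independence of~$W$ and~$Y$ together with the uniformity of~$c\oplus Y$ gives
\begin{align}
	\Pr\left[W=c\oplus Y\right]
	=
	\sum_{w\in\mathbb Z_2}\Pr\left[W=w\right]\Pr\left[c\oplus Y=w\right]
	=
	\frac12
	\,,
\end{align}
because~$\Pr[c\oplus Y=w]=1/2$ for each~$w$. Combining this identity with the containment above proves the claimed bound. The argument uses nothing about~$f$ and~$g$ beyond their parity relation, and I expect the only point that truly needs care to be the independence of~$A\oplus B$ from the parity~$Y$, which follows immediately from the product form~$P_{A,B,Z}=P_{A,B}P_Z$; everything else is a one-line coin-flip computation, so there is no serious obstacle here.
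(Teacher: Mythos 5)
Your proof is correct and is essentially the paper's own argument in a different dressing: both reduce the joint-guessing event to the single parity constraint~$A\oplus B=f(Z)\oplus g(Z)$, note that this parity is an unbiased coin (because~$\mathcal S\neq\emptyset$ and~$Z$ is uniform) independent of the guesses by~$P_{A,B,Z}=P_{A,B}P_Z$, and conclude the bound~$1/2$. The paper phrases the final computation as a decomposition over the events~$\{A=B\}$ and~$\{A\neq B\}$, which is exactly your sum over the two values of~$W=A\oplus B$.
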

\begin{proof}
	Since~$|\mathcal S| > 1$, the expression~$\bigoplus_{i\in\mathcal S}z_i$
	is a parity function.
	Furthermore, since~$Z$ is uniformly distributed, we have
	\begin{align}
		\Pr\left[f(Z)=g(Z)\right] = \Pr\left[f(Z) \neq g(Z)\right]= \frac 1 2
		\,.
	\end{align}
	Denote by~$p$ the probability that~$A=B$, and note that this probability is {\em independent\/} of the random variable~$Z$.
	We thus have
	\begin{align}
		\Pr&\left[ A=f(Z) \wedge B=g(Z) \right]
		\notag\\
		&
		=
		p
		\Pr\left[ A=f(Z)=g(Z) \mid A= B\right]
		\notag\\
		&
		\quad
		+
		(1-p)
		\Pr\left[ A=f(Z) \neq g(Z) \mid A\neq B\right]
		\\
		&
		\leq
		\frac{1}{2}
		\,.
	\end{align}
\end{proof}

Now, we acquired the tools to prove our first theorem.
\begin{proof}[Proof of Theorem~\ref{thm:bicausalinequality}]
	We start with the analysis on the upper bound of the winning probability of the game~$G_n$, where the bi-causal correlations are restricted: The set~$\mathcal K$ is fixed.
	The parties in the set~$\mathbb Z_n\setminus\mathcal K$ have access to~$X$,~{\it i.e.,}~to all inputs.
	This means that for every~\mbox{$k\in\mathbb Z_n\setminus\mathcal K$}, party~$k$ can guess~$\omega_k^n(X)$ {\em deterministically.}
	Thus, the maximum probability of winning~$G_n$ in this restricted setup equals the maximum probability that the parties in~$\mathcal K$ produce the correct guess:
	\begin{align}
		\begin{split}
			&
			\max_{
				P_{A\mid X}=P_{A_\mathcal K | X_\mathcal K}P_{A_{\setminus\mathcal K}|A_\mathcal K,X}
			}
			\Pr\left[A=\omega^n(X)\right]
			\\
			&\qquad=
			\max_{P_{A_\mathcal K\mid X_\mathcal K}}
			\Pr\left[A_\mathcal K=\omega_\mathcal K^n(X)\right]
			\,.
		\end{split}
	\end{align}
	For every~$P_{A_\mathcal K\mid X_\mathcal K}$, we decompose the winning probability as
	\begin{align}
		\begin{split}
			&\Pr\left[A_\mathcal K=\omega_\mathcal K^n(X)\right]
			\\
			&=
			\Pr\left[A_\mathcal K=\omega_\mathcal K^n(X)
				\mid
				\mathcal E^\mathcal K
			\right]
			\Pr\left[\mathcal E^\mathcal K\right]
			\\
			&
			+
			\Pr\left[A_\mathcal K=\omega_\mathcal K^n(X)
				\mid
				\mathbb Z_2^{|\mathcal K|} \setminus \mathcal E^\mathcal K
			\right]
			\Pr\left[
				\mathbb Z_2^{|\mathcal K|} \setminus \mathcal E^\mathcal K
			\right]
			\,.
		\end{split}
	\end{align}
	First, let us consider the winning probability conditioned on the event~$\mathcal E^\mathcal K$, and let~$k_0\in\mathcal K$ be the only party that has to make a non-trivial guess.
	By Equation~\eqref{eq:revised} and Definition~\ref{def:svetfunction}, we observe that there exists some~$\lambda$ and some function~$f:\mathbb Z_2^{|\mathcal K|}\rightarrow \mathbb Z_2$ such that
	\begin{align}
		\omega_{k_0}^n(x) = S^{n-|\mathcal K|}_{\lambda}(x_{\setminus \mathcal K}) \oplus f(x_\mathcal K)
		\,,
	\end{align}
	where~$S^{n-|\mathcal K|}_{\lambda}$ is an~$(n-|\mathcal K|)$-ary Ardehali-Svetlichny function.
	Thus, for party~$k_0$ it is sufficient to guess~$S^{n-|\mathcal K|}_{\lambda}(X_{\setminus\mathcal K})$ only---the variable~$f(X_\mathcal K)$ is known to party~$k$.
	In combination with Lemma~\ref{lemma:svetlichny}, we obtain
	\begin{align}
		\Pr\left[ 
			A_{k_0} = \omega_{k_0}^n(X)
		\right]
		\leq
		\frac{1}{2} + 2^{-\floor{\frac{n-|\mathcal K|}{2}}-1}
		\,,
	\end{align}
	from which the bound
	\begin{align}
		\Pr\left[A_\mathcal K=\omega_\mathcal K^n(X)
			\mid
			\mathcal E^\mathcal K
		\right]
		\leq
		\frac{1}{2} + 2^{-\floor{\frac{n-|\mathcal K|}{2}}-1}
		\label{eq:pE}
	\end{align}
	follows.
	Note that we can use Lemma~\ref{lemma:svetlichny} in this setting because a single party~$k_0$ cannot guess an Ardehali-Svetlichny function better than multiple parties together.

	In the converse case, {\em at least two\/} parties~\mbox{$k_0,k_1\in\mathcal K$} have to jointly guess the random variables~$\omega_{k_0}^n(X)$ and~$\omega_{k_1}^n(X)$.
	These two functions, however, are parity relative:
	\begin{align}
		\omega_{k_0}^n(x)\oplus\omega_{k_1}^n(x)
		=
		\beta_{k_0}
		\oplus
		\beta_{k_1}
		\oplus
		\bigoplus_{i\not\in\mathcal K}
		x_i\left( \alpha_{k_0,i}\oplus\alpha_{k_1,i} \right)
		\,,
	\end{align}
	where for some~$i\not\in\mathcal K$ the value of~$\alpha_{k_0,i}\oplus\alpha_{k_1,i}$ equals to one.
	Thus, by Lemma~\ref{lemma:parity-relative}, we get
	\begin{align}
		\Pr\left[A_\mathcal K=\omega_\mathcal K^n(X)
			\mid
			\mathbb Z_2^{|\mathcal K|} \setminus \mathcal E^\mathcal K
		\right]
		\leq \frac 1 2
		\,.
		\label{eq:pnotE}
	\end{align}
	
	By Lemma~\ref{lemma:occurenceE}, Equations~\eqref{eq:pE} and~\eqref{eq:pnotE}, the winning probability for bi-causal correlations with a fixed~$\mathcal K$ is
	\begin{align}
		&\Pr\left[A_\mathcal K=\omega_\mathcal K^n(X)\right]
		\notag
		\\
		&
		\leq
		\left( 
		\frac{1}{2} + 2^{-\floor{\frac{n-|\mathcal K|}{2}}-1}
		\right)
		\left(
		2^{-|\mathcal K|+1}
		\right)
		+
		\frac{1-2^{-|\mathcal K|+1}}{2}
		\\
		&=
		\frac 1 2
		+
		2^{-\floor{\frac{n+|\mathcal K|}{2}}}
		\,.
	\end{align}
	Since this holds for every~$\emptyset\subsetneq \mathcal K \subsetneq \mathbb Z_n$, and since bi-causal correlations are arbitrary convex combinations over~$\mathcal K$, we obtain
	\begin{align}
		\Pr&\left[A=\omega^n(X)\right]
		\leq
		\frac 1 2
		+
		2^{-\ceil{n/2}}
	\end{align}
	for all bi-causal correlations.
\end{proof}

\subsection{Saturation of bi-causal inequalities}
\label{sec:saturation}
We inductively show Theorem~\ref{thm:faces} by specifying a bi-causal strategy that reaches the upper bound on the winning probability of Theorem~\ref{thm:bicausalinequality}.
\begin{proof}[Proof of Theorem~\ref{thm:faces}]
	Place party~$0$ in the first subset, {\it i.e.,}~$\mathcal K=\{0\}$, and let party~$0$ deterministically guess the value~$0$.
	In this setting, the winning probability~$\Pr\left[ A=\omega^n(X) \right]$ reduces to the probability~$\Pr\left[ 0=\omega_0^n(X) \right]$.
	As observed above, we can express~$\omega_0^n(x)$ with an Ardehali-Svetlichny function:
	\begin{align}
		\omega_k^n(x) =
			S^{n-1}_{\lambda}(x_{\setminus 0})
			\,,
	\end{align}
	where~$\lambda$ is the alternating sequence~$(0,1,0,\dots)$.
	Party~$0$ guesses correctly if and only if the random variable~$X$ takes a value in the set
	\begin{align}
		\mathcal Z^n :=
		\left\{ 
			x\in\mathbb Z^n_2
			\mid
			S^{n-1}_{\lambda}(x_{\setminus 0}) = 0
		\right\}
		\,.
	\end{align}

	{\it Base case.}
	It is easily verified that~$|\mathcal Z^2|=4$, and~$|\mathcal Z^3|=6$, from which~$\Pr[\mathcal Z^2]=1$, and~$\Pr[\mathcal Z^3]=3/4$ follow.

	{\it Induction step.}
	We take steps of two.
	First, observe that for every~$x\in\mathbb Z_2^{n+2}$ we have
	\begin{align}
		\begin{split}
			S_\lambda^{n+2}(x_{\setminus 0})
			&
			=
			S_\lambda^{n}(x_{\setminus \{0,n,n+1\}})
			\,
			\oplus
			\\
			&
			x_nx_{n+1}
			\oplus
			x_{n(+1)}
			\oplus\\
			&
			\bigoplus_{i\in\mathbb Z_n\setminus \{0\}}
			x_i(x_n\oplus x_{n+1})
			\,,
		\end{split}
	\end{align}
	where the single term~$x_{n(+1)}$ is~$x_n$ if~$n$ is even, and~$x_{n+1}$ otherwise.
	Thus, if~\mbox{$x\in\mathcal Z^n$}, then~$(x,0,0)$,~$(x,1,1)$, and exactly one of~$(x,0,1)$ and~$(x,1,0)$ are in~$\mathcal Z^{n+2}$.
	In the alternative case, if~$x\not\in\mathcal Z^n$, then exactly one of~$(x,0,1)$ and~$(x,1,0)$ is in~$\mathcal Z^{n+2}$.
	Therefore, the cardinalities of these sets are related by
	\begin{align}
		|\mathcal Z^{n+2}| = 2|\mathcal Z^{n}| + 2^n
		\,.
	\end{align}
	By the induction hypothesis~$|\mathcal Z^n|=2^n(1/2+2^{-\ceil{n/2}})$ we therefore obtain
	\begin{align}
		|\mathcal Z^{n+2}|
		&
		=
		2\left( 
		2^n\left( 
		\frac{1}{2}+2^{-\ceil{n/2}}
		\right)
		\right)
		+2^n
		\\
		&
		=
		2^{n+1}
		+
		2^{n+1-\ceil{n/2}}
		\\
		&
		=
		2^{n+2}
		\left( 
		\frac{1}{2}
		+
		2^{-\ceil{(n+2)/2}}
		\right)
		\,,
	\end{align}
	from which~$\Pr[\mathcal Z^{n+2}] = 1/2+2^{-\ceil{(n+2)/2}}$ follows.
	Therefore, for all~$n\geq 2$ there exists at least one bi-causal strategy with which the bi-causal bound is saturated.

	Now, observe that~$\omega^n_0(x)$ is invariant under any relabelling of the parties~$0\leftrightarrow k$ where~$k\in\mathbb Z_n$ is {\em even.}
	Moreover, if we consider an odd number of parties~$n$, then~$\omega^n_0(x)$ is {\em additionally\/} invariant under any relabelling~$0\leftrightarrow \ell$ for~$\ell$ {\em odd.}
	Thus, if~$n$ is even, then there exist at least~$n/2$ bi-causal strategies that saturate the bound, and if~$n$ is odd, then there exist at least~$n$ such strategies.
\end{proof}

\subsection{Deterministic classical violation}
\label{proofs:violation}
We exploit various properties of process functions in order to show Theorem~\ref{thm:lc}.

\subsubsection{Properties of process functions}
Before we present the properties, we introduce element-wise constant functions and their respective reduced functions.
\begin{definition}[Element-wise constant and reduced function]
	An~$n$-ary function~\mbox{$\omega:\bigtimes_{k\in\mathbb Z_n}\mathcal O_k\rightarrow\bigtimes_{k\in\mathbb Z_n}\mathcal I_k$} is {\em element-wise constant\/} if and only if
	\begin{align}
		\forall k\in\mathbb Z_n, o\in\mathcal O,\tilde o_k\in\mathcal O_k:
		\omega_k(o)=\omega_k(o_{\setminus k},\tilde o_k)
		\,.
	\end{align}

	Let~$\omega$ be such a function,
	and let~$f_\ell:\mathcal I_\ell\rightarrow\mathcal O_\ell$ be some function for~$\ell\in\mathbb Z_n$.
	The {\em reduced function\/}~\mbox{$\omega^{f_\ell}: \mathcal O_{\setminus \ell} \rightarrow \mathcal I_{\setminus \ell}$}
	is~$(\omega_0^{f_\ell},\dots,\omega_{\ell-1}^{f_\ell},\omega_{\ell+1}^{f_\ell},\dots,\omega_{n-1}^{f_\ell})$ with
	\begin{align}
		\omega_k^{f_\ell} : o_{\setminus \ell}
		\mapsto
		\omega_k(o_{\setminus \ell},\bar o_\ell)
		\,,
	\end{align}
	where, for some arbitrary~$\tilde o_\ell\in\mathcal O_\ell$,
	\begin{align}
		\bar o_\ell=f_\ell(\omega_\ell(
		o_{\setminus \ell}, \tilde o_\ell))
		\,.
	\end{align}
\end{definition}
\begin{lemma}[Constant~\cite{Baumeler2017,Baumeler2020}]
	\label{lemma:constant}
	If~$\omega$ is an~$n$-party process function, then~$\omega$ is {\em element-wise constant.}
\end{lemma}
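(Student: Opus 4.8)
The plan is to prove the contrapositive: I assume that $\omega$ is \emph{not} element-wise constant and construct a single intervention $f$ admitting \emph{no} fixed point, which already contradicts the defining property $\forall f\,\exists!\,i:i=\omega(f(i))$ (the existence half suffices). Concretely, if $\omega$ fails to be element-wise constant, then there is a party $k\in\mathbb Z_n$, a tuple $o^*_{\setminus k}\in\mathcal O_{\setminus k}$, and two distinct values $o_k^*,\tilde o_k\in\mathcal O_k$ with $\omega_k(o^*_{\setminus k},o_k^*)\neq\omega_k(o^*_{\setminus k},\tilde o_k)$. I abbreviate $a:=\omega_k(o^*_{\setminus k},o_k^*)$ and $b:=\omega_k(o^*_{\setminus k},\tilde o_k)$, so that $a\neq b$ in $\mathcal I_k$.

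First I would freeze every party except $k$: let each $\ell\neq k$ apply the constant intervention $f_\ell\equiv o^*_\ell$. Then for any input $i$ one has $f(i)=(o^*_{\setminus k},f_k(i_k))$, so the full map $i\mapsto\omega(f(i))$ factors through the single coordinate $i_k$. Consequently a global fixed point $i=\omega(f(i))$ exists if and only if the one-variable map $g(i_k):=\omega_k(o^*_{\setminus k},f_k(i_k))$ on $\mathcal I_k$ has a fixed point, the remaining coordinates being then forced. This reduction is the heart of the argument: it collapses the global consistency requirement of the process function to a self-consistency condition on party $k$ alone, which is precisely the regime in which self-influence can manifest a grandfather-type paradox.

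The final step is to choose $f_k$ so that $g$ is fixed-point-free. Exploiting that $\omega_k(o^*_{\setminus k},\cdot)$ separates $o_k^*$ from $\tilde o_k$, I set
\begin{align}
	f_k(i_k):=
	\begin{cases}
		\tilde o_k & i_k=a\\
		o_k^* & i_k\neq a\,.
	\end{cases}
\end{align}
Then $g(a)=\omega_k(o^*_{\setminus k},\tilde o_k)=b\neq a$, while $g(i_k)=\omega_k(o^*_{\setminus k},o_k^*)=a\neq i_k$ for every $i_k\neq a$; hence $g$ has no fixed point. By the reduction above, the intervention $f=(f_0,\dots,f_{n-1})$ admits no $i$ with $i=\omega(f(i))$, contradicting that $\omega$ is a process function. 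I therefore conclude that $\omega_k$ cannot depend on $o_k$ for any $k\in\mathbb Z_n$, i.e., $\omega$ is element-wise constant.

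I expect the main obstacle to be the careful justification of the factorization step for arbitrary—possibly non-binary and differently sized—alphabets $\mathcal I_k$ and $\mathcal O_k$: one must verify that freezing the other interventions genuinely collapses the fixed-point problem to $g$, and that the two-valued ``negating'' feedback $f_k$ is well defined as an honest function $\mathcal I_k\to\mathcal O_k$ irrespective of the relative cardinalities of these sets. Everything else reduces to the elementary fact that a map sending $a\mapsto b$ and all other arguments to $a$ (with $a\neq b$) has no fixed point.
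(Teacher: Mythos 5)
Your proof is correct: the reduction of the global fixed-point condition to the single-coordinate map $g(i_k)=\omega_k(o^*_{\setminus k},f_k(i_k))$ is valid, the two-valued feedback $f_k$ is a well-defined function $\mathcal I_k\to\mathcal O_k$ irrespective of the alphabet sizes, and the constructed intervention then admits no fixed point, contradicting even the existence half of the defining property. Note that the paper itself gives no proof of this lemma but imports it from Refs.~\cite{Baumeler2017,Baumeler2020}; your grandfather-paradox construction (freeze all other parties with constant interventions, let party~$k$ ``negate'' its own input) is essentially the standard argument found there, so your approach matches the source.
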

We now relate process functions with their reduced functions.
\begin{lemma}[Transitivity~\cite{Baumeler2017}]
	\label{lemma:transitivity}
	Let~$\omega:\mathcal O \rightarrow \mathcal I$ be an~$n$-ary element-wise constant function.
	If there exists some~$k\in\mathbb Z_n$ such that for all~$f_k:\mathcal I_k\rightarrow\mathcal O_k$ the reduced function~$\omega^{f_k}$ is a process function, then~$\omega$ is a process function.
\end{lemma}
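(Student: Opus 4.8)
The plan is to fix an arbitrary intervention $f=(f_0,\dots,f_{n-1})$ and to exhibit a bijection between the fixed points of $\omega\circ f$ on $\mathcal I$ and the fixed points of $\omega^{f_k}\circ f_{\setminus k}$ on $\mathcal I_{\setminus k}$, where $k$ is the special index furnished by the hypothesis. Since $\omega^{f_k}$ is assumed to be a process function, the reduced intervention $f_{\setminus k}$ on the remaining $n-1$ parties admits exactly one fixed point $i_{\setminus k}^\ast$ with $i_{\setminus k}^\ast=\omega^{f_k}(f_{\setminus k}(i_{\setminus k}^\ast))$. Transporting this uniqueness across the bijection then forces $\omega\circ f$ to have a unique fixed point, and as $f$ was arbitrary this is precisely the statement that $\omega$ is a process function.

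To build the forward direction of the bijection, I would unfold the condition $i=\omega(f(i))$ by writing $o=f(i)$, so that $o_j=f_j(i_j)$ and $i=\omega(o)$. The element-wise constancy hypothesis is the decisive lever: because $\omega_k$ ignores its $k$-th argument, $i_k=\omega_k(o)=\omega_k(o_{\setminus k})$ is pinned down as soon as $o_{\setminus k}$ is, and hence $o_k=f_k(\omega_k(o_{\setminus k}))=\bar o_k$ coincides exactly with the closure value appearing in the definition of the reduced function. Substituting $o_k=\bar o_k$ into the remaining components yields, for each $j\neq k$, the identity $i_j=\omega_j(o_{\setminus k},\bar o_k)=\omega_j^{f_k}(o_{\setminus k})$, which rewrites as $i_{\setminus k}=\omega^{f_k}(f_{\setminus k}(i_{\setminus k}))$. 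Thus every fixed point of $\omega\circ f$ restricts to one of $\omega^{f_k}\circ f_{\setminus k}$.

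For the converse I would start from a fixed point $i_{\setminus k}$ of $\omega^{f_k}\circ f_{\setminus k}$, reconstruct the missing coordinate by $i_k:=\omega_k(f_{\setminus k}(i_{\setminus k}))$, and verify component by component that the completed tuple $(i_{\setminus k},i_k)$ satisfies $i=\omega(f(i))$: the equations for $j\neq k$ hold because $i_{\setminus k}$ is a fixed point of the reduced map, while the $k$-th equation holds by the definition of $i_k$ together with element-wise constancy. Confirming that restriction and reconstruction are mutually inverse is then bookkeeping, so the two fixed-point sets are in bijection, their cardinalities agree, and the value $1$ descends from the reduced function to $\omega$.

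The step I expect to be delicate is recovering $i_k$ from $i_{\setminus k}$ alone. Without element-wise constancy the defining relation would read $i_k=\omega_k(o_{\setminus k},f_k(i_k))$, an implicit equation in $i_k$ that might have no solution or several, and the clean one-to-one correspondence would break down. The heart of the argument is therefore exactly the observation that element-wise constancy decouples party $k$'s outgoing message from its own reception, making $\bar o_k$ well defined and the reconstruction of $i_k$ unambiguous; once this is secured, the uniqueness of the global fixed point follows immediately from that of the reduced one.
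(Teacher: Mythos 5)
Your proof is correct. Note that the paper does not actually prove Lemma~\ref{lemma:transitivity} itself---it defers the proof to Ref.~\cite{Baumeler2017}---so there is no in-paper argument to compare against; your fixed-point bijection is the natural route and is presumably in the spirit of the cited proof. The argument is complete: element-wise constancy makes $i_k=\omega_k(o_{\setminus k},\tilde o_k)$ and hence $o_k=f_k(i_k)=\bar o_k$ functions of $o_{\setminus k}$ alone, which is exactly what lets you restrict any fixed point of $\omega\circ f$ to one of $\omega^{f_k}\circ f_{\setminus k}$ and, conversely, reconstruct the $k$-th coordinate unambiguously; you also correctly use the hypothesis ``for all $f_k$'' so that the reduced function built from the $k$-th component of the arbitrarily chosen intervention $f$ is guaranteed to be a process function. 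Your closing remark pinpoints the right crux: without element-wise constancy the reconstruction $i_k=\omega_k(o_{\setminus k},f_k(i_k))$ would be an implicit equation and the correspondence would fail.
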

The above lemma, proven in Ref.~\cite{Baumeler2017}, can be made stronger: It is sufficient that only for {\em some\/} functions~$f_k$ the reduced function~$\omega^{f_k}$ is a process function.
We show this in the special case where for all~$k$, all sets~$\mathcal O_k$ and~$\mathcal I_k$ are binary.
\begin{lemma}
	\label{lemma:3to4}
	Let~$\omega:\mathbb Z_2^n \rightarrow \mathbb Z_2^n$ be element-wise constant.
	If the reduced functions~$\omega^{f_k}$ for~$f_k$ being the constant-zero, constant-one, and identity function, are process functions,
	then for all~$f_k$,~$\omega^{f_k}$ is a process function.
\end{lemma}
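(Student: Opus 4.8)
The plan is to exploit that, since every~$\mathcal O_k$ and~$\mathcal I_k$ is binary, there are exactly four functions~$f_k:\mathbb Z_2\to\mathbb Z_2$: the constant-zero~$0$, the constant-one~$1$, the identity~$\mathrm{id}$, and the negation~$\neg$. The first three are covered by the hypothesis, so it suffices to prove that~$\omega^{\neg}$ is a process function, that is, that for every collection~$g=(g_\ell)_{\ell\in\mathbb Z_n\setminus\{k\}}$ the composition~$\omega^{\neg}\circ g$ has a \emph{unique} fixed point. I would count these fixed points directly and show the count is always~$1$, so that neither the ``at least one'' nor the ``at most one'' reformulation is needed.

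The key structural observation I would isolate is that all four reduced functions are assembled from the two \emph{constant} reductions~$\omega^{0},\omega^{1}$ (where~$\omega^{c}_\ell(o_{\setminus k})=\omega_\ell(o_{\setminus k},c)$, well defined by element-wise constancy of~$\omega_k$) by branching on the single bit~$b(o_{\setminus k}):=\omega_k(o_{\setminus k})$. Indeed, since the reduced output is~$\bar o_k=f_k(\omega_k(o_{\setminus k}))$, the identity reduction satisfies~$\omega^{\mathrm{id}}=\omega^{0}$ on~$\{b=0\}$ and~$\omega^{\mathrm{id}}=\omega^{1}$ on~$\{b=1\}$, whereas the negation reduction \emph{swaps the branches}: $\omega^{\neg}=\omega^{1}$ on~$\{b=0\}$ and~$\omega^{\neg}=\omega^{0}$ on~$\{b=1\}$. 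This branch-swap is the whole point of the argument.

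Fixing~$g$, let~$\hat\imath^{(0)}$ and~$\hat\imath^{(1)}$ be the unique fixed points of~$\omega^{0}\circ g$ and~$\omega^{1}\circ g$, which exist by hypothesis, and write~$b^{(c)}:=b(g(\hat\imath^{(c)}))$ for the branch bit each one lands on. I would then argue that any fixed point of~$\omega^{\neg}\circ g$ must agree, on its branch, with one of the two constant systems, hence by uniqueness equal~$\hat\imath^{(0)}$ or~$\hat\imath^{(1)}$ subject to a consistency condition on~$b$. The same reasoning applied to~$\omega^{\mathrm{id}}$ yields the counts
\begin{align}
	\#\{\text{fixed points of }\omega^{\mathrm{id}}\circ g\}
	&= [\,b^{(0)}=0\,]+[\,b^{(1)}=1\,],\\
	\#\{\text{fixed points of }\omega^{\neg}\circ g\}
	&= [\,b^{(1)}=0\,]+[\,b^{(0)}=1\,],
\end{align}
where~$[\cdot]$ denotes the indicator. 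The hypothesis on~$\omega^{\mathrm{id}}$ forces the first sum to equal~$1$; complementing each indicator via~$[\,b^{(c)}=0\,]=1-[\,b^{(c)}=1\,]$ then immediately gives that the second sum also equals~$1$, which is exactly the claim.

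The main obstacle I anticipate is making this correspondence airtight rather than merely plausible: I must verify that element-wise constancy makes~$b$ a function of~$o_{\setminus k}$ alone, that on each branch a fixed point of~$\omega^{\neg}\circ g$ (resp.~$\omega^{\mathrm{id}}\circ g$) is \emph{genuinely} a fixed point of the appropriate constant system (so that uniqueness of~$\hat\imath^{(0)},\hat\imath^{(1)}$ applies), and that when both candidates occur they are distinct---they differ in the reconstructed~$k$-th coordinate---so that the displayed expressions are honest cardinalities and not over- or under-counts. Once the bookkeeping on the branch bit~$b$ is pinned down, the conclusion is the one-line Boolean identity above, applied uniformly over all~$g$.
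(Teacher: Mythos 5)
Your proof is correct and rests on the same ingredients as the paper's: the unique fixed points of the two constant reductions (your $\hat\imath^{(0)},\hat\imath^{(1)}$ are the paper's $\alpha,\beta$), the branch bits they land on (your $b^{(0)},b^{(1)}$ are the paper's $\bar\alpha,\bar\beta$), and the branch-swap structure of the negation reduction relative to the identity reduction. The genuine difference lies in what is extracted from these ingredients. The paper establishes only that the bit-flip reduction has \emph{at least one} fixed point for every intervention (if $\bar\alpha=1$ then $\alpha$ is a fixed point, if $\bar\beta=0$ then $\beta$ is, and $(\bar\alpha,\bar\beta)=(0,1)$ is excluded because it would hand the identity reduction two fixed points); to conclude that this suffices, it implicitly leans on the equivalence recalled in the footnote following its definition of process functions~\cite{Baumeler2017,Baumeler2020}, namely that having \emph{some} fixed point for every intervention already characterizes process functions. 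Your counting argument instead proves \emph{exactly one} fixed point directly: the counts $[b^{(0)}=0]+[b^{(1)}=1]$ for $\omega^{\mathrm{id}}\circ g$ and $[b^{(1)}=0]+[b^{(0)}=1]$ for $\omega^{\neg}\circ g$ sum to $2$, so the hypothesis that the first equals $1$ forces the second to equal $1$ as well. This buys self-containedness---neither the ``no contradiction'' nor the ``no ambiguity'' reformulation is needed---at the cost of the bookkeeping you flag, all of which is sound: a fixed point of either mixed reduction lying on branch $b=c$ is genuinely a fixed point of the corresponding constant system (so uniqueness of $\hat\imath^{(0)},\hat\imath^{(1)}$ applies), and the two candidates are distinct whenever both occur, since equal fixed points would have equal branch bits. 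Your identity also makes explicit something the paper's case analysis leaves silent: under the hypothesis, the configuration $(\bar\alpha,\bar\beta)=(1,0)$, in which the negation reduction would acquire \emph{two} fixed points, is just as impossible as $(0,1)$.
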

\begin{proof}
	Without loss of generality, and for better presentation, we set~$k$ to~$0$.
	If the reduced functions of~$\omega$ for~$f_{0}$ constant zero, constant one, and identity, are process functions, we have that for all~$f_{\setminus 0}$ there exist {\em unique\/} fixed points for the functions
	\begin{align}
		x &\mapsto \omega_{\setminus 0}\left(0, f_{\setminus 0}(x)\right)\,,\\
		x &\mapsto \omega_{\setminus 0}\left(1, f_{\setminus 0}(x)\right)\,,\\
		x &\mapsto \omega_{\setminus 0}\left(\omega_{0}(0,f_{\setminus 0}(x)),f_{\setminus 0}(x)\right)
		\,.
		\label{eq:wid}
	\end{align}
	Let~$\alpha$ and~$\beta$ be the fixed points of the first two functions,
	{\it i.e.,}~we have the identities
	\begin{align}
		\alpha = \omega_{\setminus 0}\left(0, f_{\setminus 0}(\alpha)\right)\,,\qquad
		\beta = \omega_{\setminus 0}\left(1, f_{\setminus 0}(\beta)\right)
		\,.
	\end{align}
	We define the bits the first party receives upon applying the function~$\omega$ to these fixed points:
	\begin{align}
		\bar \alpha := \omega_{0}\left(0, f_{\setminus 0}(\alpha)\right)\,,\qquad
		\bar \beta := \omega_{0}\left(0, f_{\setminus 0}(\beta)\right)
		\,.
	\end{align}
	Note that these bits are independent of the first argument; the function~$\omega$ is element-wise constant.
	We now show that this implies that the reduced function~$\omega^{f_k}$, where~$f_k$ is the bit-flip function, {\it i.e.,}~the function
	\begin{align}
		x &\mapsto \omega_{\setminus 0}\left(1\oplus \omega_{0}(0,f_{\setminus 0}(x)),f_{\setminus 0}(x)\right)
		\,,
		\label{eq:wnot}
	\end{align}
	also has a fixed point.
	In the case where~$\bar\alpha=1$,~$\alpha$ is a fixed point of Equation~\eqref{eq:wnot}:
	\begin{align}
		\omega_{\setminus 0}\left(1\oplus \omega_{0}(0,f_{\setminus 0}(\alpha)),f_{\setminus 0}(\alpha)\right)
		&=
		\omega_{\setminus 0}\left(0,f_{\setminus 0}(\alpha)\right)\\
		&=
		\alpha
		\,.
	\end{align}
	In the case where~$\bar\beta=0$,~$\beta$ is a fixed point of Equation~\eqref{eq:wnot}:
	\begin{align}
		\omega_{\setminus 0}\left(1\oplus \omega_{0}(0,f_{\setminus 0}(\beta)),f_{\setminus 0}(\beta)\right)
		&=
		\omega_{\setminus 0}\left(1,f_{\setminus 0}(\beta)\right)\\
		&=
		\beta
		\,.
	\end{align}
	The last case, {\it i.e.,}~$\bar\alpha=0$ and~$\bar\beta=1$ {\em cannot\/} arise.
	Assume towards a contradiction that~$\bar\alpha=0$ and~$\bar\beta=1$.
	This implies that~$\alpha$ and~$\beta$ are fixed points of Equation~\eqref{eq:wid}:
	\begin{align}
		\omega_{\setminus 0}\left(\omega_{0}(0,f_{\setminus 0}(\alpha)),f_{\setminus 0}(\alpha)\right)
		&=
		\omega_{\setminus 0}\left(0,f_{\setminus 0}(\alpha)\right)
		=
		\alpha\,,\\
		\omega_{\setminus 0}\left(\omega_{0}(0,f_{\setminus 0}(\beta)),f_{\setminus 0}(\beta)\right)
		&=
		\omega_{\setminus 0}\left(1,f_{\setminus 0}(\beta)\right)
		=
		\beta
		\,.
	\end{align}
	Since Equation~\eqref{eq:wid} has a {\em unique\/} fixed point, we conclude~$\alpha=\beta$ which contradicts~$\bar\alpha\not=\bar\beta$.
	By noting that there is a total of four functions from a bit to a bit~({\it i.e.,}~constant zero, constant one, identity, and bit-flip), the proof is concluded.
\end{proof}

\subsubsection{Proof of Theorem~\ref{thm:lc}}
The above properties allow us to prove our statement: The functions~$\omega^n$ of the game~$G_n$ are process functions.
A schematic representation of the proof is given in Figure~\ref{fig:reductions}.
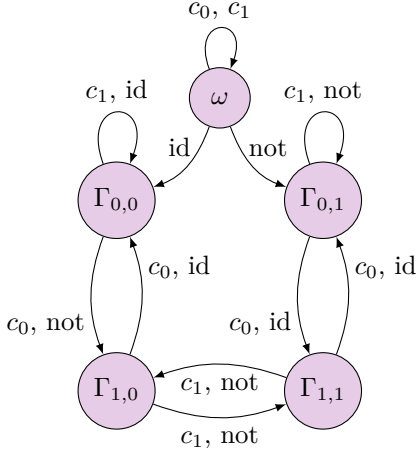
\begin{figure}
	\centering
	\begin{tikzpicture}[>=latex,pf/.style={circle,draw,fill=violet!20!white,minimum size=0.8cm,text centered}]
		\node[pf] (p0) {$\omega$};
		\node[pf,below left=1cm of p0] (p1) {$\Gamma_{0,0}$};
		\node[pf,below right=1cm of p0] (p2) {$\Gamma_{0,1}$};
		\node[pf,below=1.5cm of p1] (p3) {$\Gamma_{1,0}$};
		\node[pf,below=1.5cm of p2] (p4) {$\Gamma_{1,1}$};
		\draw[->] (p0) to[out=110,in=70,looseness=7] node[midway,above] {$c_0$, $c_1$} (p0);
		\draw[->] (p1) to[out=110,in=70,looseness=7] node[midway,above] {$c_1$, id} (p1);
		\draw[->] (p2) to[out=110,in=70,looseness=7] node[midway,above] {$c_1$, not} (p2);
		\draw[->] (p0) to[out=250,in=20] node[near start,left] {id} (p1);
		\draw[->] (p0) to[out=290,in=160] node[near start,right] {not} (p2);
		\draw[->] (p1) to[out=250,in=110] node[near end,left] {$c_0$, not} (p3);
		\draw[->] (p3) to[out=70,in=290] node[near end,right] {$c_0$, id} (p1);
		\draw[->] (p2) to[out=250,in=110] node[near end,left] {$c_0$, id} (p4);
		\draw[->] (p4) to[out=70,in=290] node[near end,right] {$c_0$, id} (p2);
		\draw[->] (p3) to[out=-20,in=200] node[midway,below] {$c_1$, not} (p4);
		\draw[->] (p4) to[out=160,in=20] node[midway,below] {$c_1$, not} (p3);
	\end{tikzpicture}
	\caption{Schematic representation of the proof of Theorem~\ref{thm:lc}.
	The arrows point from~$n$-party to the respective~$(n-1)$-party functions where the last party applies the identity (id), not, constant-zero ($c_0$), or constant-one ($c_1$) function.
	Due to Lemma~\ref{lemma:transitivity} and the fact that~$\omega^3$ and~$\Gamma^3_{\alpha,\beta}$ for all~$\alpha,\beta\in\{0,1\}$ are process functions (see Appendix~\ref{app:gamma}), we establish that~$\omega^n$ is a process function for all~$n$.
	Lemma~\ref{lemma:3to4} allows us to neglect the edges labeled by not.
}
	\label{fig:reductions}
\end{figure}
\begin{proof}[Proof of Theorem~\ref{thm:lc}]
	We show this theorem by induction.
	Assume towards a contradiction that the function~$\omega^n$ is {\em not\/} a process function.
	Clearly,~$\omega^n$ is element-wise constant.
	Thus, by Lemma~\ref{lemma:transitivity} there exists some function~$f_{n-1}:\mathbb Z_2\rightarrow\mathbb Z_2$ such that the reduced function~$\omega^{n,{f_{n-1}}}$, {\it i.e.,}~the function~$\omega^n$ where party~$n-1$ implements~$f_{n-1}$, is {\em not\/} a process function.
	Furthermore, by Lemma~\ref{lemma:3to4} we know that this will be the case where~$f_{n-1}$ is the constant-zero, constant-one, or identity function.

	{\em Constant-zero function.}
	In the case where~$f_{n-1}$ is the constant-zero function, the reduced function~$\omega^{n,{f_{n-1}}}$ equals~$\omega^{n-1}$.
	To show this, we express~$\omega_k^n(x)$ in terms of~$\omega_k^{n-1}(x_{\setminus n-1})$:
	\begin{align}
		\begin{split}
			\omega_k^n(x) 
			&
			=
			\omega_k^{n-1}(x_{\setminus n-1})
			\oplus 
			\bigoplus_{i\in\mathbb Z_{n-1}\setminus\{k\}}
			x_ix_{n-1}
			\\
			&
			\oplus
			\left[
			k\not\equiv_2 n
			\right]
			x_{n-1}
			\label{eq:intermsofless}
			\,,
		\end{split}
	\end{align}
	where we use the notation that~$[p]=1$ if~$p$ holds, and~$0$ otherwise.
	Now, for all~$k\in\mathbb Z_{n-1}$, and where~$f_{n-1}$ is the constant-zero function, we have
	\begin{align}
		\omega_k^{n,f_{n-1}}(x)
		= \omega_k^n(x_{\setminus n-1},0) 
		= \omega_k^{n-1}(x_{\setminus n-1})
		\,.
	\end{align}

	{\em Constant-one function.}
	In the case where~$f_{n-1}$ is the constant-one function, the reduced function~$\omega^{n,{f_{n-1}}}$ is equivalent to~$\omega^{n-1}$ up to a permutation of the parties and a constant.
	By Equation~\eqref{eq:intermsofless} we get
	\begin{align}
		\omega_k^{n,f_{n-1}}(x)
		&=
		\omega_k^n(x_{\setminus n-1},1) 
		\\
		&=
		\omega_k^{n-1}(x_{\setminus n-1})
		\oplus
		\!\!\!\bigoplus_{i\in\mathbb Z_{n-1}\setminus\{k\}}\!\!\!
		x_i
		\oplus
		\left[
		k\not\equiv_2 n
		\right] \\
		&=
		\bar{\omega}_k^{n-1}(x_{\setminus n-1})
		\oplus
		\left[
		k\not\equiv_2 n
		\right]  
		\,.
	\end{align}
	The function~$\bar{\omega}^{n-1}$ is equal to the function~$\omega^{n-1}$ where party~$k\in\mathbb Z_{n-1}$ becomes party~$n-2-k$.
	This fact is shown in Appendix~\ref{app:equivalence}.
	Hence,~$\omega^{n-1}$ is a process function if and only if~$\bar{\omega}^{n-1}$ is a process function.
	The additional term~$[k\not\equiv_2 n]$ is independent of the input to the function and can therefore be simulated by any party.
	This means that, since we assumed~$\omega^n$ {\em not\/} to be a process function, at least one of~$\omega^{n-1}$ or~$\omega^{n,f_{n-1}}$, where~$f_{n-1}$ is the identity function, is also not a process function.

	{\em Identity function.}
	For the last case, let~$f_{n-1}$ be the identity function.
	In Appendix~\ref{app:id}, we show that the reduced function, in this case, is~$\Gamma_{0,0}^{n-1}$.
	Appendix~\ref{app:gamma} shows that~$\Gamma_{0,0}^{n-1}$ is a process function.

	Since~$\omega^3$ (see Equation~\eqref{eq:threepartygame}) and~$\Gamma_{0,0}^{n-1}$ are process functions, we reach a contradiction:~$\omega^n$ is a process function.
\end{proof}

\subsection{All process functions are process matrices}
\label{proofs:quantum}
We prove Theorem~\ref{thm:qc} constructively with the limit theorem~\cite{Baumeler2016,Baumelerphd}.
\begin{proof}[Proof of Theorem~\ref{thm:qc}]
	Let~$\omega:\bigtimes_{k\in\mathbb Z_n} \mathcal O_k \rightarrow \bigtimes_{k\in\mathbb Z_n} \mathcal I_k$ be a process function and define the operator
	\begin{align}
		W^\omega:=\sum_{o\in\mathcal O}
		\ket o\bra o_O
		\otimes
		\ket{\omega(o)}\bra{\omega(o)}_I
	\end{align}
	with appropriate Hilbert spaces,
	and where~$\{\ket o\}_{o\in\mathcal O}$ and~$\{\ket i\}_{i\in\mathcal I}$ is a fixed basis for these Hilbert spaces.
	The limit theorem states that for every collection of completely positive trace-preserving maps~\mbox{$(\mu_k:\mathcal L(\mathcal I_k)\rightarrow\mathcal L(\mathcal O_k))_{k\in\mathbb Z_n}$}, where~$M_k$ are the corresponding Choi operators and~$M=\bigotimes_{k\in\mathbb Z_n}M_k$ is diagonal in the same basis as~$W^\omega$:
	\begin{align}
		\Tr[MW^\omega] = 1
		\,.
	\end{align}
	Now let~$M'=\bigotimes_{k\in\mathbb Z_n}M'_k$ be the Choi operator of the completely positive trace-preserving maps~\mbox{$(\mu'_k:\mathcal L(\mathcal I_k)\rightarrow\mathcal L(\mathcal O_k))_{k\in\mathbb Z_n}$} not necessarily diagonal in the same basis as~$W^\omega$.
	Then
	\begin{align}
		\Tr[M'W^\omega] = \Tr[M'_\text{diag}W^\omega]
	\end{align}
	with
	\begin{align}
		M'_\text{diag} :=
		\!\!
		\sum_{(o,i)\in\mathcal O\times\mathcal I}
		\!\!
		\ket{o,i}
		\bra{o,i}
		M'
		\ket{o,i}
		\bra{o,i}
		\,.
	\end{align}
	Since~$M'_\text{diag}$ is the Choi operator of a completely positive trace-preserving map~($M'_\text{diag} \geq 0, \Tr_{O} M_\text{diag} = \id$)
	and diagonal in the same basis as~$W^\omega$, we have
	\begin{align}
		\Tr[M'W^\omega] = \Tr[M'_\text{diag}W^\omega] = 1
		\,.
	\end{align}
	We conclude the proof by noting that~$W^\omega$ is positive semi-definite and that it implements the same dynamics as~$\omega$.
\end{proof}

\noindent
{\bf Acknowledgments.}
We thank Alastair Abbott, Costantino Budroni, Fabio Costa, Paul Erker, Simon Milz, and Eleftherios Tselentis for enlightening discussion.
We thank two anonymous reviewers for their helpful comments and insights.
\"AB~is supported by the Austrian Science Fund~(FWF) through projects ZK3 (Zukunftskolleg) and F7103 (BeyondC), and by the Erwin Schr\"odinger Center for Quantum Science~\&~Technology (ESQ).

\bibliographystyle{quantum}
\bibliography{refs}
\onecolumn
\newpage
\appendix

\section{Permutation of the parties}
\label{app:equivalence}
We define the function~$\bar{\omega}^n:\mathbb Z_2^n \rightarrow \mathbb Z_2^n$ as~$\bar{\omega}^n(x)=(\bar{\omega}^n_0(x),\bar{\omega}^n_1(x),\dots,\bar{\omega}^n_{n-1}(x))$, where for all~$k\in\mathbb Z_n$ we have
\begin{align}
	\begin{split}
		\bar{\omega}_k^n: \mathbb Z_2^n &\rightarrow \mathbb Z_2\\
		x&\mapsto
		\bigoplus_{\substack{i,j\in\mathbb Z_n\setminus\{k\}\\i < j}}
		x_ix_j
		\oplus
		\bigoplus_{i\in\mathbb Z_n}
		\bar{\gamma}_{k,i} x_i
		\,,
	\end{split}
\end{align}
with
\begin{align}
	\bar{\gamma}_{k,i} :=
	\begin{cases}
		1 & (i < k \wedge i\equiv_2 k ) \vee (k < i \wedge i\not\equiv_2 k)\\
		0 & \text{otherwise.}
	\end{cases}
\end{align}
In comparison to~$\omega^n$, this function uses the alternative single terms~({\it cf.\/}~Equation~\eqref{eq:chi}).

\begin{lemma}
	The function~$\bar{\omega}^{n}$ is equivalent to~$\omega^n$ under a relabeling of the parties.
\end{lemma}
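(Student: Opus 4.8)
The plan is to exhibit an explicit relabeling permutation and verify that it carries $\omega^n$ to $\bar\omega^n$ term by term. I take the permutation to be the reversal involution $\pi:\mathbb Z_n\to\mathbb Z_n$, $\pi(k):=n-1-k$, and I claim that
\begin{align}
	\bar\omega^n_k(x) = \omega^n_{\pi(k)}(x^\pi)\,,\qquad\text{where } (x^\pi)_i := x_{\pi(i)} = x_{n-1-i}\,,
\end{align}
for every party $k$ and every input $x$. Since $\pi$ is a bijection, relabeling the parties (together with their input and output channels) by $\pi$ turns $\omega^n$ into $\bar\omega^n$, which is precisely the asserted equivalence; in particular the process-function property is automatically preserved, since a mere relabeling of the parties affects neither the existence nor the uniqueness of fixed points.

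First I would split each $\omega^n_k$ into its quadratic and linear parts and treat them separately. For the quadratic part, observe that $\bigoplus_{i<j,\,i,j\neq m} x_ix_j$ depends only on the unordered collection of coordinates distinct from the index $m$. Applying the reversal $x\mapsto x^\pi$ to $\omega^n_{\pi(k)}$ substitutes $x_{n-1-i}$ for each $x_i$; the excluded index $\pi(k)=n-1-k$ is sent to $k$, and because the summand $x_ix_j$ is symmetric, the reversed quadratic part of $\omega^n_{\pi(k)}$ coincides verbatim with the quadratic part of $\bar\omega^n_k$. Thus the quadratic parts match outright, with no parity bookkeeping required.

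The substance lies in the linear part, where I must establish the coefficient identity
\begin{align}
	\gamma_{n-1-k,\,n-1-i} = \bar\gamma_{k,i}\qquad\text{for all } k,i\in\mathbb Z_n\,.
\end{align}
I would verify this directly from the case definition~\eqref{eq:chi} of $\gamma$ and the corresponding definition of $\bar\gamma$. Writing $K=n-1-k$ and $I=n-1-i$, the reversal flips the order, $I<K\Leftrightarrow k<i$ and $K<I\Leftrightarrow i<k$, while it preserves parity, $I\equiv_2 K\Leftrightarrow i\equiv_2 k$ (the common constant $n-1$ cancels). Substituting these equivalences into the predicate $(I<K\wedge I\not\equiv_2 K)\vee(K<I\wedge I\equiv_2 K)$ that defines $\gamma_{K,I}=1$ yields exactly the predicate $(i<k\wedge i\equiv_2 k)\vee(k<i\wedge i\not\equiv_2 k)$ that defines $\bar\gamma_{k,i}=1$. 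Combining the matched quadratic and linear parts gives the displayed relation and hence the lemma.

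The only step demanding care is the order-and-parity bookkeeping in the coefficient identity; everything else reduces to the symmetry of the quadratic form under coordinate reversal. I therefore expect no genuine obstacle, only the need to track the order-reversing and parity-preserving behavior of $\pi$ consistently across the two case definitions.
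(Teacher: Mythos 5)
Your proposal is correct and follows essentially the same route as the paper's own proof: both relabel the parties by the reversal $k\mapsto n-1-k$, substitute the reversed input string, and reduce the claim to the symmetry of the quadratic part plus the coefficient identity $\gamma_{n-1-k,\,n-1-i}=\bar\gamma_{k,i}$. Your explicit order-and-parity bookkeeping for that identity is exactly the content of the paper's reindexing step ($i'=n-i-1$, $j'=n-j-1$), just spelled out in more detail.
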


\begin{proof}
	Reversing the order of parties in~$\omega^n$ gives~$\bar{\omega}^n$, {\it i.e.,}~if party~$k$ in the function~$\omega^n$ becomes party~$n-k-1$, then we obtain~$\bar{\omega}^n$.
	We denote with~$\omega^{n \circlearrowright}_k$ the function of the $k$-th party where the order of parties in $\omega^n_k$ is reversed.
	More precisely, the input to and the output of the $k$-th party is $x^\circlearrowright_k = x_{n-k-1}$ and $\omega^n_{n-k-1} (x^\circlearrowright)$ where $x^\circlearrowright$ represents the reversed input string. 
	This equivalence is shown in the following calculation, where we use~$i':= n-i-1$, and~$j':= n-j-1$:
	\begin{align}
		\omega^{n \circlearrowright}_k (x)
		&=
		\omega^n_{n-k-1} (x^\circlearrowright)
		\\
		&=
		\bigoplus_{\substack{i,j\in\mathbb Z_n\setminus\{n-k-1\}\\i < j}}
		x_{n-i-1}x_{n-i-1}
		\oplus
		\bigoplus_{i\in\mathbb Z_n}
		\gamma_{n-k-1,i}
		x_{n-i-1}
		\\
		&=
		\bigoplus_{\substack{i',j'\in\mathbb Z_n\setminus\{k\}\\i' < j'}}
		x_{i'}x_{j'}
		\oplus
		\bigoplus_{i'\in\mathbb Z_n}
		\bar\gamma_{k,i'}
		x_{i'}
		\\
		&=
		\bar{\omega}^n_k(x)
		\,.
	\end{align}
\end{proof}

\section{Reduced function of~$\omega^n$ where party~$n-1$ implements the identity}
\label{app:id}
Let the~$f_{n-1}$ be the identity function.
The reduced function in this case is
\begin{align}
	\omega_k^{n,f_{n-1}}(x)
	&=
	\omega_k^n(x_{\setminus n-1},\omega_{n-1}^n(x_{\setminus n-1},0)) 
	\,.
	\label{eq:svetid}
\end{align}
The input to party~$n-1$, {\it i.e.,}~$\omega_{n-1}^n(x_{\setminus n-1},0)$, is
\begin{align}
	\bigoplus_{\substack{i,j\in\mathbb Z_{n-1}\\i<j}}
	x_ix_j
	\oplus
	\bigoplus_{\substack{i\in\mathbb Z_{n-1}\\i\equiv_2 n}}
	x_i
	=
	\bigoplus_{\substack{i,j\in\mathbb Z_{n-1}\setminus\{k\}\\i<j}}
	x_ix_j
	\oplus
	\bigoplus_{i\in\mathbb Z_{n-1}\setminus\{k\}}
	x_ix_k
	\oplus
	\bigoplus_{\substack{i\in\mathbb Z_{n-1}\setminus\{k\}\\i\equiv_2 n}}
	x_i
	\oplus
	\left[
		k \equiv_2 n
	\right]
	x_k
	\,.
\end{align}
This expression is now plugged into Equation~\eqref{eq:svetid}, {\it i.e.,}~the variable~$x_{n-1}$ in Equation~\eqref{eq:intermsofless} takes the above value.
We evaluate the two expressions in Equation~\eqref{eq:intermsofless} involving~$x_{n-1}$ separately.
The first term is
\begin{align}
	\begin{split}
		&\bigoplus_{i\in\mathbb Z_{n-1}\setminus\{k\}}
		x_i\omega_{n-1}^n(x_{\setminus n-1},0)
		\\
		&
		=
		\bigoplus_{\substack{i,j,\ell\in\mathbb Z_{n-1}\setminus\{k\}\\i<j<\ell}}
		x_ix_jx_\ell
		\oplus
		\bigoplus_{i\in\mathbb Z_{n-1}\setminus\{k\}}
		x_ix_k
		\oplus
		\bigoplus_{\substack{i,j\in\mathbb Z_{n-1}\setminus\{k\}\\i<j\\i\not\equiv_2 j}}
		x_ix_j
		\oplus
		\bigoplus_{\substack{i\in\mathbb Z_{n-1}\setminus\{k\}\\i\equiv_2 n}}
		x_i
		\oplus
		\left[
			k\equiv_2 n
		\right]
		\bigoplus_{i\in\mathbb Z_{n-1}\setminus \{k\}}
		x_ix_k
		\,,
	\end{split}
\end{align}
and the second is
\begin{align}
	\left[
		k\not\equiv_2 n
	\right]
	\omega_{n-1}^n(x_{\setminus n-1},0)
	=
	\left[
		k\not\equiv_2 n
	\right]
	\left(
	\bigoplus_{\substack{i,j\in\mathbb Z_{n-1}\setminus\{k\}\\i<j}}
	x_ix_j
	\oplus
	\bigoplus_{i\in\mathbb Z_{n-1}\setminus\{k\}}
	x_ix_k
	\oplus
	\bigoplus_{\substack{i\in\mathbb Z_{n-1}\setminus\{k\}\\i\equiv_2 n}}
	x_i
	\right)
	\,.
\end{align}
Thus, for all~$k\in\mathbb Z_{n-1}$, Equation~\eqref{eq:svetid} is the parity of~$\omega_k^{n-1}(x_{\setminus n-1})$ and these last two expressions:
\begin{align}
	\begin{split}
		\omega_k^{n,f_{n-1}}(x)
		&
		=
		\bigoplus_{\substack{i,j,\ell\in\mathbb Z_{n-1}\setminus\{k\}\\i<j<\ell}}
		x_ix_jx_\ell
		\oplus
		\bigoplus_{\substack{i,j\in\mathbb Z_{n-1}\setminus\{k\}\\i<j\\i\not\equiv_2 j}}
		x_ix_j
		\oplus
		\left[ k\equiv_2 n \right]
		\bigoplus_{\substack{i,j\in\mathbb Z_{n-1}\setminus\{k\}\\i<j}}
		x_ix_j
		\\
		&
		\qquad\qquad
		\oplus
		\bigoplus_{i\in\mathbb Z_{n-1}}
		\gamma_{k,i}
		x_i
		\oplus
		\left[ k\equiv_2 n \right]
		\bigoplus_{\substack{i\in\mathbb Z_{n-1}\setminus \{k\}\\i\equiv_2 n}}
		x_i
		\,.
	\end{split}
\end{align}
Therefore, the function~$\omega_k^{n,f_{n-1}}$, where~$f_{n-1}$ is the identity function for party~$n-1$, equals the function~$\Gamma_{0,0,k}^{n-1}$ defined in Appendix~\ref{app:gamma}.
That same appendix also shows that~$\Gamma_{0,0}^{n-1}$ is a process function.

\section{Family of process functions}
\label{app:gamma}
For all~$\alpha,\beta\in\{0,1\}$, we define the function~$\Gamma_{\alpha, \beta}^n:\mathbb Z_2^n \rightarrow \mathbb Z_2^n$ as~$\Gamma_{\alpha,\beta}^n(x)= (\Gamma_{\alpha,\beta,0}^n(x),\Gamma_{\alpha,\beta,1}^n(x),\dots,\Gamma_{\alpha,\beta,n-1}^n(x))$, where for all~$k\in\mathbb Z_n$
\begin{align}
	\begin{split}
		\Gamma_{\alpha,\beta,k}^n : \mathbb Z_2^n &\rightarrow \mathbb Z_2\\
		x&\mapsto
		\bigoplus_{\substack{i,j,\ell\in\mathbb Z_n\setminus\{k\}\\i<j<\ell}}
		x_ix_jx_\ell
		\oplus
		\bigoplus_{\substack{i,j\in\mathbb Z_n\setminus\{k\}\\i<j\\i\not\equiv_2 j}}
		x_ix_j
		\oplus
		\left[ \alpha \not\equiv_2 k+n \right]
		\bigoplus_{\substack{i,j\in\mathbb Z_n\setminus \{k\}\\i<j}}
		x_ix_j
		\\
		&\qquad\qquad
		\oplus
		\bigoplus_{i\in\mathbb Z_n}
		\left(
		(\beta\oplus 1)
		\gamma_{k,i}
		+
		\beta\bar\gamma_{k,i}
		\right)
		x_i
		\oplus
		\left[ \alpha\not\equiv_2 k+n \right]
		\bigoplus_{\substack{i\in\mathbb Z_n\setminus\{k\}\\i\equiv_2 k}}
		x_i
		\,.
	\end{split}
\end{align}
\begin{theorem}
	The function~$\Gamma_{\alpha,\beta}^n$ for all~$\alpha, \beta\in\{0,1\}$ and for all~$n\geq 1$ is a process function.
\end{theorem}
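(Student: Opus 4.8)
The plan is to prove the statement by induction on~$n$, driving the argument with the reduction machinery of Lemmas~\ref{lemma:constant}, \ref{lemma:transitivity}, and~\ref{lemma:3to4} along the lines depicted in Figure~\ref{fig:reductions}. First I would record the structural fact that makes this possible: every sum defining~$\Gamma^n_{\alpha,\beta,k}$ ranges over indices different from~$k$, and~$\gamma_{k,k}=\bar\gamma_{k,k}=0$ by Equation~\eqref{eq:chi}, so~$\Gamma^n_{\alpha,\beta,k}$ does not depend on~$x_k$; hence~$\Gamma^n_{\alpha,\beta}$ is element-wise constant, which is exactly the hypothesis needed to invoke Lemma~\ref{lemma:transitivity}. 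For the base case~$n=1$ all the sums are empty, so~$\Gamma^1_{\alpha,\beta,0}\equiv 0$ is the constant-zero function, which has a unique fixed point and is therefore a process function.

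For the induction step I would fix the last party~$n-1$ and use Lemma~\ref{lemma:transitivity} together with Lemma~\ref{lemma:3to4} to reduce the goal to showing that the three reduced functions~$\Gamma^{n,f_{n-1}}_{\alpha,\beta}$, for~$f_{n-1}$ the constant-zero, constant-one, and identity maps, are process functions; the bit-flip case is then free (the dashed ``not'' edges of Figure~\ref{fig:reductions}). The two constant reductions are the routine ones, since substituting a constant for~$x_{n-1}$ raises no degrees. For~$f_{n-1}$ constant-zero one sets~$x_{n-1}=0$, deleting every monomial that contains~$x_{n-1}$; because the indicator~$[\alpha\not\equiv_2 k+n]$ re-reads as~$[\alpha'\not\equiv_2 k+(n-1)]$ once the result is viewed as an~$(n-1)$-party function, the parity change in~$n$ toggles the~$\alpha$-parameter and one identifies the outcome with a member~$\Gamma^{n-1}_{\alpha',\beta}$ of the family. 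For~$f_{n-1}$ constant-one one sets~$x_{n-1}=1$, whereupon the cubic and quadratic terms carrying the index~$n-1$ collapse to lower-degree terms and a constant; after collecting them I expect to land on another family member, up to a party relabeling (as in Appendix~\ref{app:equivalence}) and per-component additive constants. Both operations preserve the process-function property: relabeling is a mere permutation of coordinates, and an additive constant on a component~$\Gamma_\ell$ can be undone by pre-composing party~$\ell$'s map, so the number of fixed points is unchanged.

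The genuinely demanding step, and the one I expect to be the main obstacle, is the identity reduction, where---in contrast to the constant cases---the quantity substituted for~$x_{n-1}$ is not a constant but the cubic polynomial~$\Gamma^n_{\alpha,\beta,n-1}(x_{\setminus n-1},0)$. Mirroring Appendix~\ref{app:id}, I would insert this value wherever~$x_{n-1}$ occurs in each remaining component~$\Gamma^n_{\alpha,\beta,k}$. Since the substituted polynomial is already cubic, the cubic monomials of~$\Gamma^n_{\alpha,\beta,k}$ that contain~$x_{n-1}$ produce monomials of degree up to five; the crux is to show, using~$x_i^2=x_i$ over~$\mathbb Z_2$ and the precise combinatorics of the index sets, that every monomial of degree exceeding three cancels in pairs and that the survivors reassemble exactly into the template of some~$\Gamma^{n-1}_{\alpha',\beta'}$. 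This is where careful bookkeeping of the parity conditions~$i\equiv_2 j$ and~$i\equiv_2 k$, and of the distinction between~$\gamma$ and~$\bar\gamma$, becomes unavoidable, and it is precisely this computation that pins down which pair~$(\alpha',\beta')$ each edge of Figure~\ref{fig:reductions} reaches. Once all three reduced functions are identified as~$(n-1)$-party~$\Gamma$-functions, they are process functions by the induction hypothesis, and Lemma~\ref{lemma:transitivity} lifts the conclusion back to~$\Gamma^n_{\alpha,\beta}$, closing the induction for every~$\alpha,\beta\in\{0,1\}$ and all~$n\geq 1$.
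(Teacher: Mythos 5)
Your proposal is correct and follows essentially the same route as the paper's proof in Appendix~\ref{app:gamma}: element-wise constancy, reduction of party~$n-1$ via Lemmas~\ref{lemma:transitivity} and~\ref{lemma:3to4} to the constant-zero, constant-one, and identity cases, and the recognition that each reduction lands back in the~$\Gamma$ family (up to harmless additive constants), so that a simultaneous induction over all four pairs~$(\alpha,\beta)$ closes. The only differences are that you anchor the induction at~$n=1$ rather than the paper's~$n=3$ (both work), and that the identity-reduction bookkeeping you defer is precisely the computation occupying most of the paper's appendix, whose outcome matches your predictions (see Table~\ref{tab:red}: constant-zero toggles~$\alpha$, while constant-one and identity return family members up to additive constants).
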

\begin{proof}
	We explicitly prove this statement for~$\alpha=\beta=0$; the other cases are analogous.
	For better presentation, define~$\Gamma_k^n:=\Gamma_{0,0,k}^n$ and~$\Gamma^n:=\Gamma_{0,0}^n$.
	The proof idea is the same as for Theorem~\ref{thm:lc}.
	First, we assume that~$\Gamma^n$ is {\em not\/} a process function.
	Since this function is element-wise constant, it follows that at least one of the reduced functions~$\Gamma^{n,f_{n-1}}$, where~$f_{n-1}$ is the constant-zero, constant-one, or identity function, is not a process function~(see~Lemma~\ref{lemma:transitivity} and Lemma~\ref{lemma:3to4}).
	First, we express~$\Gamma_k^n$ in terms of~$\Gamma_k^{n-1}$:
	\begin{align}
		\begin{split}
			\Gamma_k^n(x)
			&= 
			\Gamma_k^{n-1}(x_{\setminus n-1})
			\oplus
			\bigoplus_{\substack{i,j\in\mathbb Z_{n-1}\setminus\{k\}\\i<j}}
			x_ix_jx_{n-1}
			\oplus
			\bigoplus_{\substack{i\in\mathbb Z_{n-1}\setminus\{k\}\\i\equiv_2 n}}
			x_ix_{n-1}
			\oplus
			\left[ k\not\equiv_2 n \right]
			\bigoplus_{i\in\mathbb Z_{n-1}\setminus\{k\}}
			x_i
			x_{n-1}
			\\
			&\qquad\qquad
			\oplus
			\bigoplus_{\substack{i,j\in\mathbb Z_{n-1}\setminus\{k\}\\i<j}}
			x_ix_j
			\oplus
			\bigoplus_{\substack{i\in\mathbb Z_{n-1}\setminus\{k\}\\i\equiv_2 k}}
			x_i
			\, .
			\label{eq:gammared}
		\end{split}
	\end{align}
	In the case where~$f_{n-1}$ is the constant-zero function, we obtain for the reduced function
	\begin{align}
		\Gamma_k^{n,f_{n-1}}(x)
		&
		=
		\Gamma_k^n(x,0)
		=
	        \Gamma_k^{n-1}(x)
		\oplus
		\bigoplus_{\substack{i,j\in\mathbb Z_{n-1}\setminus\{k\}\\i<j}}
		x_ix_j
		\oplus
		\bigoplus_{\substack{i\in\mathbb Z_{n-1}\setminus\{k\}\\i\equiv_2 k}}
		x_i
		=
		\Gamma_{1,0,k}^{n-1}(x)
		\,.
	\end{align}
	In the case where~$f_{n-1}$ is the constant-one function, we obtain for the reduced function
	\begin{align}
		\Gamma_k^{n,f_{n-1}}(x)
		&
		=
		\Gamma_k^n(x,1)
		=
		\Gamma_k^{n-1}(x)
		\oplus
		\bigoplus_{\substack{i\in\mathbb Z_{n-1}\setminus \{k\}\\i\equiv_2 k}}
		x_i
		\oplus 
		\bigoplus_{\substack{i\in\mathbb Z_{n-1}\setminus \{k\}\\i\equiv_2 n}}
		x_i
		\oplus 
		\left[
		k \not\equiv_2 n 
		\right]
		\bigoplus_{i\in\mathbb Z_{n-1}\setminus \{k\}}
		x_i
		=
		\Gamma^{n-1}_k(x)
		\,.
	\end{align}
	Finally, let~$f_{n-1}$ be the identity function.
	In this case~$\Gamma^{n,f_{n-1}}$ is
	\begin{align}
		\Gamma^{n,f_{n-1}}\left(x_{\setminus n-1}\right) = \Gamma_{\setminus n-1}^n\left(x_{\setminus n-1}, \Gamma_{n-1}^n(x)\right)
		\,.
		\label{eq:gammawithid}
	\end{align}
	We first express~$\Gamma_{n-1}^n(x)$:
	\begin{align}
		\Gamma_{n-1}^n(x) = 
		\bigoplus_{\substack{i,j,\ell\in\mathbb Z_{n-1}\\i<j<\ell}}
		x_ix_jx_\ell
		\oplus
		\bigoplus_{\substack{i,j\in\mathbb Z_{n-1}\\i<j\\i\equiv_2 j}}
		x_ix_j
		\oplus
		\bigoplus_{i\in\mathbb Z_{n-1}}
		x_i
		\,.
	\end{align}
	Now, we evaluate the terms in~$\Gamma^n(x)$ that involve~$x_{n-1}$ (see Equation~\eqref{eq:gammared})---the term~$x_{n-1}$ is replaced by~$\Gamma_{n-1}^n(x)$.
	The first term involving~$x_{n-1}$ becomes
	\begin{align}
		\bigoplus_{\substack{i,j\in\mathbb Z_{n-1}\setminus\{k\}\\i<j}}
		x_ix_j\Gamma_{n-1}^n(x)
		\,.
		\label{eq:gammafirstproduct}
	\end{align}
	Let us perform this calculation step by step.
	By taking the product with the first term in~$\Gamma_{n-1}^n(x)$, we obtain
	\begin{align}
		\bigoplus_{\substack{i,j\in\mathbb Z_{n-1}\setminus\{k\}\\i<j}}
		x_ix_j
		\left(
		\bigoplus_{\substack{i,j,\ell\in\mathbb Z_{n-1}\\i<j<\ell}}
		x_ix_jx_\ell
		\right)
		=
		\bigoplus_{\substack{i,j,\ell\in\mathbb Z_{n-1}\setminus\{k\}\\i<j<\ell}}
		x_ix_jx_\ell
		\oplus
		x_k
		\left( 
		\bigoplus_{\substack{i,j\in\mathbb Z_{n-1}\setminus\{k\}\\i<j}}
		x_ix_j
		\right)
		\,.
	\end{align}
	Then again, the product with the second term in~$\Gamma_{n-1}^n(x)$, gives
	\begin{align}
		\begin{split}
			&
			\bigoplus_{\substack{i,j\in\mathbb Z_{n-1}\setminus\{k\}\\i<j}}
			x_ix_j
			\left(
			\bigoplus_{\substack{i,j\in\mathbb Z_{n-1}\\i<j\\i\equiv_2 j}}
			x_ix_j
			\right)
			\\
			&
			=
			\bigoplus_{\substack{i,j,\ell,m\in\mathbb Z_{n-1}\setminus\{k\}\\i<j<\ell<m\\i+j+\ell+m\equiv_2 1}}
			x_ix_jx_\ell x_m
			\oplus
			x_k
			\left( 
			\bigoplus_{\substack{i,j,\ell\in\mathbb Z_{n-1}\setminus\{k\}\\i<j<\ell\\i+j+\ell\equiv_2 k}}
			x_ix_jx_\ell
			\right)
			\oplus
			x_k
			\left( 
			\bigoplus_{\substack{i,j\in\mathbb Z_{n-1}\setminus\{k\}\\i<j\\i\not\equiv_2 j}}
			x_ix_j
			\right)
			\oplus
			\left( 
			\bigoplus_{\substack{i,j\in\mathbb Z_{n-1}\setminus\{k\}\\i<j\\i\equiv_2 j}}
			x_ix_j
			\right)
			\,.
		\end{split}
	\end{align}
	Finally, the product with the third term in~$\Gamma_{n-1}^n(x)$ is
	\begin{align}
		\bigoplus_{\substack{i,j\in\mathbb Z_{n-1}\setminus\{k\}\\i<j}}
		x_ix_j
		\left(
		\bigoplus_{i\in\mathbb Z_{n-1}}
		x_i
		\right)
		=
		\bigoplus_{\substack{i,j,\ell\in\mathbb Z_{n-1}\\i<j<\ell}}
		x_ix_jx_\ell
		\,.
	\end{align}
	By taking the sum modulo two of the above three expressions, we obtain that Equation~\eqref{eq:gammafirstproduct} equals
	\begin{align}
		\bigoplus_{\substack{i,j,\ell,m\in\mathbb Z_{n-1}\setminus\{k\}\\i<j<\ell<m\\i+j+\ell+m\equiv_2 1}}
		x_ix_jx_\ell x_m
		\oplus
		\bigoplus_{\substack{i,j,\ell\in\mathbb Z_{n-1}\setminus\{k\}\\i<j<\ell\\i+j+\ell\equiv_2 k}}
		x_ix_jx_\ell x_k
		\oplus
		\bigoplus_{\substack{i,j\in\mathbb Z_{n-1}\setminus\{k\}\\i<j\\i+j\equiv_2 1}}
		x_ix_jx_k
		\oplus
		\bigoplus_{\substack{i,j\in\mathbb Z_{n-1}\setminus\{k\}\\i<j\\i+j\equiv_2 0}}
		x_ix_j
		\,.
	\end{align}
	By similar calculations, we replace the variable~$x_{n-1}$ appearing in the second and third term of Equation~\eqref{eq:gammared} with~$\Gamma_{n-1}^n(x)$, and obtain the expressions
	\begin{align}
		\begin{split}
			\bigoplus_{\substack{i\in\mathbb Z_{n-1}\setminus\{k\}\\i\equiv_2 n}}
			&
			x_i\Gamma_{n-1}^n(x)
			=
			\bigoplus_{\substack{i,j,\ell,m\in\mathbb Z_{n-1}\setminus\{k\}\\i<j<\ell<m\\i+j+\ell+m\equiv_2 1}}
			x_ix_jx_\ell x_m
			\oplus
			\bigoplus_{\substack{i,j,\ell\in\mathbb Z_{n-1}\setminus\{k\}\\i<j<\ell\\i+j+\ell\equiv_2 n}}
			x_ix_jx_\ell x_k
			\\
			\oplus
			&
			\left[
				k\equiv_2 n
			\right]
			\bigoplus_{\substack{i,j\in\mathbb Z_{n-1}\setminus\{k\}\\i<j\\i+j\equiv_2 1}}
			x_ix_jx_k
			\oplus
			\bigoplus_{\substack{i,j\in\mathbb Z_{n-1}\setminus\{k\}\\i<j\\i+j\equiv_2 1}}
			x_ix_j
			\oplus
			\left[
				k\not\equiv_2 n
			\right]
			\bigoplus_{\substack{i\in\mathbb Z_{n-1}\setminus\{k\}\\i\equiv_2 n}}
			x_ix_k
			\oplus
			\bigoplus_{\substack{i\in\mathbb Z_{n-1}\setminus\{k\}\\i\equiv_2 n}}
			x_i
			\,,
		\end{split}
	\end{align}
	and
	\begin{align}
		\begin{split}
			\left[
				k\not\equiv_2 n
			\right]
			\bigoplus_{i\in\mathbb Z_{n-1}\setminus\{k\}}
			&
			x_i\Gamma_{n-1}^n(x)
			=
			\left[
				k\not\equiv_2 n
			\right]
			\left(
			\bigoplus_{\substack{i,j,\ell\in\mathbb Z_{n-1}\setminus\{k\}\\i<j<\ell}}
			x_ix_jx_\ell x_k
			\oplus
			\bigoplus_{\substack{i,j\in\mathbb Z_{n-1}\setminus\{k\}\\i<j\\i+j\equiv_2 1}}
			x_ix_jx_k
			\right.
			\\
			&
			\left.
			\oplus
			\bigoplus_{\substack{i\in\mathbb Z_{n-1}\setminus\{k\}\\i\equiv_2 n}}
			x_ix_k
			\oplus
			\bigoplus_{i\in\mathbb Z_{n-1}\setminus\{k\}}
			x_i
			\right)
			\,.
		\end{split}
	\end{align}
	Now, we have everything at hand to evaluate Equation~\eqref{eq:gammawithid}:
	The resulting function is~$\Gamma_k^{n-1}$.
	
	Hence, if~$f_{n-1}$ is the constant-zero function, we retrieve~$\Gamma_{1,0,k}^{n-1}$, if~$f_{n-1}$ is the constant-one or the identity function, we retrieve~$\Gamma_k^{n-1}$.
	For the other values of~$\alpha$ and~$\beta$ the proof is similar.
	The reduced functions are summarized in Table~\ref{tab:red}, we also include---simply for completeness---the case where~$f_{n-1}$ is the bit-flip function.
	\begin{table}
		\begin{center}
			\begin{tabular}{|c|c|c|c|c|} 
				\hline
				& Constant $0$ & Constant $1$ & Identity & Negation\\ 
				\hline
				$\Gamma_{0,0,k}^n$ & 
				$\Gamma_{1,0,k}^{n-1}$ & 
				$\Gamma_{0,0,k}^{n-1}$ & 
				$\Gamma_{0,0,k}^{n-1}$ & 
				$\Gamma_{1,0,k}^{n-1}$\\ 
				\hline
				$\Gamma_{0,1,k}^{n}$ & 
				$\Gamma_{1,1,k}^{n-1}$ & 
				$\Gamma_{0,1,k}^{n-1} \oplus 1$ & 
				$\Gamma_{1,1,k}^{n-1}$ & 
				$\Gamma_{0,1,k}^{n-1} \oplus 1$\\ 
				\hline
				$\Gamma_{1,0,k}^{n}$ & 
				$\Gamma_{0,0,k}^{n-1}$ & 
				$\Gamma_{1,1,k}^{n-1} \oplus \left[k\neq n\right]$ & 
				$\Gamma_{0,0,k}^{n-1}$ & 
				$\Gamma_{1,1,k}^{n-1} \oplus \left[k\neq n\right]$\\ 
				\hline
				$\Gamma_{1,1,k}^{n}$ & 
				$\Gamma_{0,1,k}^{n-1}$ & 
				$\Gamma_{1,0,k}^{n-1} \oplus \left[k=n\right]$ & 
				$\Gamma_{0,1,k}^{n-1}$ & 
				$\Gamma_{1,0,k}^{n-1} \oplus \left[k=n\right]$\\
				\hline
			\end{tabular}
		\end{center}
		\caption{
			\label{tab:red}
			The reduced function of~$\Gamma_{\alpha,\beta}^{n}$ for all functions~$f_{n-1}$ of party~$n-1$.
		}
	\end{table}

	What remains to show is the base case.
	For~$n=3$, we get that~$\Gamma_{1,\beta}^3$ is the process function described in Equation~\eqref{eq:threepartygame}, and thus is yet another generalization of that three-party function.
	Then again,~$\Gamma_{0,\beta}^3$ is {\em causal\/} and therefore a process function as well.
\end{proof}

\end{document}